\DeclareTextSymbolDefault{\textquotedbl}{T1}
\newtheorem{proposition}{Proposition}
\newtheorem{assumption}{Assumption}
\newenvironment{proof}[1][Proof]{\noindent\textbf{#1.} }{\ \rule{0.5em}{0.5em}}
\begin{document}
\title{\textbf{Optimal Design of Limited Partnership Agreements}}
\date{\today}
\author{Mohammad Abbas Rezaei\thanks{Haas School of Business at UC Berkeley. Email address: mohammad\_rezaei@haas.berkeley.edu}}
\maketitle
\begin{abstract}
\begin{singlespace}
\noindent %TCIMACRO{\TeXButton{TeX field}{\singlespacing}}%%BeginExpansion\singlespacing%EndExpansion
General partners (GP) are sometimes paid on a deal-by-deal basis and
other times on a whole-portfolio basis. When is one method of payment
better than the other? %
\begin{comment}
I develop a model to see how the method of compensation's payment
can affect the behavior of general partners (GP) in a limited partnership
agreement (LPA).
\end{comment}
{} I show that when assets (projects or firms) are highly correlated
or when GPs have low reputation, whole-portfolio contracting is superior
to deal-by-deal contracting. In this case, by bundling payouts together,
whole- portfolio contracting enhances incentives for GPs to exert
effort. Therefore, it is better suited to alleviate the moral hazard
problem which is stronger than the adverse selection problem in the
case of high correlation of assets or low reputation of GPs. In contrast,
for low correlation of assets or high reputation of GPs, information
asymmetry concerns dominate and deal-by-deal contracts become optimal,
as they can efficiently weed out bad projects one by one. These results
shed light on recent empirical findings on the relationship between
investors and venture capitalists.
\end{singlespace}

\textbf{Key Words}: Limited Partnership Agreement, Compensation Timing,
Portfolio Choice.  
\end{abstract}
\thispagestyle{empty}

\noindent \onehalfspacing

\newpage{}

\addtocounter{page}{-1}

\section{Introduction}

In private equity, the relationship between fund managers (general
partners or GPs) and investors (limited partners or LPs) is governed
by a \textquotedblleft limited partnership agreement\textquotedblright{}
(LPA). These contracts are crucial in determining how GPs behave for
the following reasons. First, LPs have limited resources outside of
these contracts to discipline GPs. Second, these agreements typically
remain in effect for about a decade, and recently up to 15 years,
(with little room for renegotiation). Finally, GPs' actions are hard
to observe and writing a contract which provides the right incentives
for GPs is of critical importance. \\
\\
In general, there are three main financial components in an LPA. These
are the management fee, carried interest, and the method of payments
to GPs. While the structure of the management fee and carried interest
has been the subject of extensive research, there is virtually no
theory on why the method of payment is important and how it effects
GPs' performance. Historically LPAs offer two methods for paying carried
interest to GPs. The first method is deal-by-deal or \textquotedblleft American\textquotedblright .
This provision allows GPs to earn the interest as soon as each deal
is exited. The second method is whole-fund or \textquotedblleft European\textquotedblright .
In this method, LPs receive the entire interest on their investment(s)
before GPs get any carried interest.\footnote{In Litvak (2009), there is detailed explanation on different provisions
for these methods.} \\
\\
At first glance, it seems that the European method is more favorable
to LPs-{}-in fact, Huther et al (2020) calls it the \textquotedbl LP-friendly
contract\textquotedbl . In particular, if we assume that the GP does
not change her strategy under different types of contracting, then
whole-portfolio contracting is preferred to the deal-by-deal method
for investors. However, as the GP changes her strategy as the contract
changes, it is not clear which method is more efficient for investors.
Here is an example which illuminates the difference between these
two methods. Suppose a GP has invested in a fund consisting of two
firms. Suppose one of the firms exits with a high return but the other
one loses money so that in total the return is low. In a deal-by-deal
contract, the GP would get some interest on the successful exit. However
in the whole-portfolio method, since the low-return investment offsets
the high-return one, the GP will receive almost nothing and the whole
return will go to the investor. \footnote{Even in the presence of claw-back provisions which requires GPs to
return some of the return at the end to the LP, still the GP gets
an interest free loan from the LP in the meantime. Moreover in the
sample of contracts considered in Cumming and Johan (2009), only about
26\% of contracts had claw-back. }\\
 \\
To fix ideas, consider the following scenario. Suppose there is an
LP who wants to invest in a pool of two projects but has no expertise
to find profitable investment opportunities. As a result, he hires
a GP to do the job. The GP has to exert effort to find good investment
opportunities, but even with significant effort she may end up with
low quality projects. As is prevalent in this setting, the LP has
no control over the GP's actions, nor does he know the quality of
the projects, unlike the GP. Thus, the contracting is subject to both
moral hazard and adverse selection.\\
\\
Within the setting outlined above, I investigate the conditions under
which each method of payment (deal-by-deal or whole-portfolio) is
optimal. As a result, I can explain some empirical findings documented
in the literature. First, I show that when projects are highly correlated,
whole-portfolio contracting is optimal for the LP. As the correlation
declines, the space of portfolios where deal-by-deal contracting is
preferred expands. This phenomena has been documented empirically
in Magro (2018). The mechanism behind this result comes from the trade-off
between the moral hazard about the effort to find good projects versus
the information asymmetry about the quality of projects. When projects
have high correlation, bundling the performance of projects together
can enhance incentives for the GP to exert effort on them. In this
case, even when projects are subject to different degrees of adverse
selection, the loss of efficiency is still low enough that whole-portfolio
contracting is preferred to deal-by-deal contracting which can handle
adverse selection efficiently. \\
\\
Second I show that when the GP is not reputable, it is more likely
that the LP should use whole-portfolio contracting compared to when
contracting with a reputable agent. \footnote{By non-reputable agent, I mean an agent that investor can not verify
her access to investment opportunities hence needs to be distinguished
from fly-by-night operators.}This result is in alignment with the findings in Huther et al (2020).
In this paper, the authors propose that when a GP is more reputable,
they have more market power and hence can get more favorable contracting
terms. In my setting, however, this comes from the fact that for non-reputable
agents, whole-portfolio contracting can reduce the chance of making
bad investments, hence improving the investment strategy. Therefore,
the sorting effect exists in this environment, but indirectly as a
result of the change of behavior of the agent due to the terms of
the contract. \\
\\
The model yields other results and predictions. For example, I show
that when there is little or no information asymmetry about the quality
of projects between investor and agent, whole-portfolio contracting
is the dominant form of contracting. This can explain why we see this
form of contracting when the underlying assets are public firms. Specifically
in the case of hedge funds or mutual funds, the payout to the agent
is almost always a function of the performance of the whole portfolio
rather than the individual performance of assets in the portfolio.
I also predict that investors' information can affect the method of
payment. When investors are not fully informed on the structure of
an investment, they prefer to have a narrower scope of investment
(hence higher correlation) and use the whole-portfolio contracting
method. \\
\\
The main feature of the model which enables me to show these results
is the fact that projects are heterogeneous. If different projects
are always subject to same degree of moral hazard and information
asymmetry, then bundling the payouts together has no efficiency loss
and whole-portfolio contracting is the dominant method of contracting,
as is the case for many contracts in the real world. This is the dominant
assumption in the literature, in the seminal work of Diamond (1984)
and subsequent studies. For example, Laux (2001) considers a pool
of homogeneous projects and show how investors can design better contracts
by the pooling and loosening of limited liability. However, when a
typical VC invests in a pool of projects, it is reasonable to assume
a high degree of heterogeneity between projects.\footnote{VCs invest in projects which are highly innovative with unique business
plans with very few assets in place unlike for example banks which
give loan to ordinary businesses or mortgages to residential/commercial
properties. As a result, we expect much more heterogeneity in VCs
invested portfolios. } \\
\\
The heterogeneity of projects creates a trade off between moral hazard
and adverse selection. When a contract is written on the whole-portfolio
basis, investors can more easily persuade agents to exert effort on
the projects through bundling the payouts. However, whole-portfolio
contracting takes away the flexibility to deal with the different
degrees of adverse selection that the projects are subject to. For
higher correlation between projects or lower reputation of agents,
the priority is to mitigate the more severe moral hazard problem,
and the whole-portfolio contracting is therefore preferred. On the
contrary, when the correlation between projects is low or the agent
is reputable, adverse selection is more severe, and deal-by-deal contracting
is better suited to deal with this issue.\\
\\
This paper relates to the theoretical literature in the area of PE
funding. In Axelson et al (2009), the authors study the problem of
leverage in buyouts and show that a combination of ex-ante pooled
financing and ex-post deal-by-deal financing is optimal. In their
setting, the timing of the investment on projects is different, while
in a lot of limited partnership contracts the GP is required to choose
the portfolio firms early in the life-span of the LPA. In another
similar work, Fang (2019) shows why LPs restrict the investment timing
of GPs. In both of these works, the authors abstract away from the
moral hazard problem between LPs and GPs, and also consider a pool
of similar projects. Because of the homogeneity between projects,
when the method of financing is restricted to ex-ante, whole-portfolio
financing is always optimal in their setting and they are not able
to explain the abundance of the deal-by-deal ex-ante contracting in
the PE industry. \\
\\
This work also contributes to the literature on investment pooling
and portfolio contracting. Inderst et al (2007) consider the case
in which investors faces multiple agents and investment pooling and
credit rationing can motivate optimal investment strategy. Their main
mechanism relies on the competition among agents, while in my work
credit rationing has no bite as investors face only one agent. Fulghieri
and Sevilir (2009) also consider the case of contracting between an
investor and multiple agents and focus on the double moral hazard
problem between GPs and entrepreneurs. In contrast, I abstract away
from GP/entrepreneur problems and focus on the contracting between
GPs and LPs. This paper also relates to the literature on moral hazard
with learning, He et al (2017) and Miao and Rivera (2016), and experimentation
and Bandit problems, Pourbabaee (2020), as well.\\
\\
Empirically, the first work which addresses the importance of the
method of compensation in VC settings is Litvak (2009). She shows
that the shift in the timing of compensation can affect the present
value of the payment to the GP as much as changing the contracting
terms themselves. While the importance of the compensation method
is discussed in Litvak (2009), Huther et al (2020) and Magro (2018)
study the effects of payment methods on the GP investment strategy
and fund's return. All of these papers are empirical and offer little
theory on the matter.\\
\\
More broadly, the first work which studies GP compensation is Gompers
and Lerner (1999). The authors explore the cross sectional and time
variation of the management fee and carried interest in the contract
terms, assuming that contracts have the same method of payment. Metrick
and Yasuda (2010) study a similar problem using an option-pricing
framework, and focus more on buyout funds. Unlike these works, Robinson
and Sensoy (2013) have access to cash flow data as well as contracting
terms, which links the management payment to the performance using
a novel data set containing all the payment from a big institutional
investor to GPs.\\
\\
The paper proceeds as follows. Section \ref{sec:Model}, introduces
the models and shows the optimal contracting on one project. In Section
\ref{sec:Optimal-Portfolio-Contracting}, I solves the problem of
optimal whole-portfolio contracting and compare it to the deal-by-deal
contract. Section \ref{sec:Non-Reputable-GP} consider the same problem
for non-reputable GPs and I compare the results to the case of reputable
agents. Section \ref{sec:Extension} considers various extensions
of the model. Finally Section \ref{sec:Conclusion} concludes.

\section{Model\label{sec:Model}}

There are three classes of agents in the model: limited partners (investors
or LPs), general partners (GPs) and fly-by-night operators (FNOs).
All agents are risk-neutral and have access to a safe asset technology
with a return which is normalized to zero. There are two types of
general partners, reputable and non-reputable. Both types of general
partners have access to a pool of projects in which they can invest
in. The limited partner has capital which is needed to run projects.\footnote{Throughout the paper, I use he/him to refer to the LP and she/her
to refer to the GP. } FNOs have no access to the pool of risky projects but they can mimic
the behavior of a GP. I assume that if the GP is reputable, then the
LP can verify that she has access to projects. However, the LP can
not distinguish between a non-reputable agent and a FNO. Initially,
I focus on reputable GPs and discuss contracting with non-reputable
GPs in Section \ref{sec:Non-Reputable-GP}. Every project needs an
investment outlay of $I$. The GP has no initial money and should
raise it from the LP if she decides to invest in the project(s).\footnote{The assumption that GP has no initial capitl has no effect on the
results. We can assume that GP needs extra capital $I$ as long the
payout of the contract to the GP is at least as her initial capital. } There are two types of projects, $\theta\in\{G,B\}.$ A good project
(type $G$) has guaranteed return $R$ (hence it is always successful)
but a bad project (type $B$) has return $R$ with probability $p$
and return 0 with probability $1-p.$ The GP can also opt to not invest
in a project and invest the raised capital in a safe asset, therefore
receiving the return $I$. Hence, the possible outcomes are $\{0,I,R,2I,R+I,2R$\}
if the GP raises enough capital for two projects ($2I)$. Clearly,
if the GP raises only $I$, then possible returns are $\{0,I,R\}.$
Type $B$ projects are negative NPV, so I assume

\begin{assumption}\label{assumption-1}

\[
pR<I.
\]

\end{assumption}

The GP can exert effort to increase the chance of getting a good project.
Moreover, if the GP exerts no effort for a project, then the project
which is chosen is guaranteed to be bad (type $B$). Otherwise, if
the GP exerts a binary effort with cost $c$, the chance of getting
a good project (type $G$) is $\lambda$. I assume that the decision
to exert effort is optimal in the following sense 

\begin{assumption}\label{assumption-2}

\[
\lambda R+(1-\lambda)I>I+c
\]

\end{assumption}

which can be written as 

\begin{equation}
R-\frac{c}{\lambda}>I.\label{eq:feasible-FB}
\end{equation}
However, it is possible that the agent exerts effort but does not
commit to not invest if the quality is bad. In this case, the return
is

\[
\lambda R+(1-\lambda)pR
\]
 which is less than $\lambda R+(1-\lambda)I$ by assumption \ref{assumption-1}.
It can be seen that if there is no agency friction, then when equation
(\ref{eq:feasible-FB}) holds, the agent/investor exerts effort to
obtain a good project and invest in the project if he ends up with
a type $G$, otherwise keeping the money in a safe asset. In this
case, the profit made from the project is $\lambda(R-I)-c.$ In my
definition $\frac{1}{\lambda}$ measures the extent of the moral hazard
issue. Higher $\lambda$ means higher chance of obtaining a good project,
so the moral hazard problem is less severe. On the other hand, $p$
measures the extent of the information asymmetry between agents, since
for higher $p$ it is harder to give incentives for the GP to not
invest in a bad project. 

The model has three dates $t=0,1,2$ and two periods. At $t=0,$ the
contract between the GP and the LP is written and capital is raised.
Then between dates 0 and 1, the GP can exert effort to increase the
chance of getting a good project. At $t=1,$ the type of projects
is revealed to the GP and she makes the investment either in these
projects or safe assets. Finally, at $t=2$, cash flows are realized
and agents receive their money based on the contract. In the real
world, it is possible that projects exit at different times but if
the contract is based on whole-portfolio performance, then money is
stored in an escrow account until distributed later when all the projects
exit. Hence my assumption on having the same exit time is not unrealistic. 

\subsection{Deal-by-deal contract\label{subsec:Deal-by-deal-contract}}

In this section, I consider the contracting problem when the contract
between the GP and the LP is written in a deal-by-deal way. Since
agents are risk-neutral, the optimal deal-by-deal contract consists
of two optimal contracts on a single project. Therefore, I only need
to study the contracting problem for one project. 

In order to fund projects, claims $s_{GP}(x)=s(x)$ and $s_{LP}(x)=x-s_{GP}(x)$
are issued, which determines how much agents will receive when the
payout of the project is $x.$ I impose following a priori assumptions
on the payout of securities. 
\begin{itemize}
\item \textbf{Limited Liability:\;}$0\leq s_{GP}(x),s_{LP}(x)$.
\item \textbf{Monotonicity:\;}$s(x)$ and $x-s(x)$ are non-decreasing
in $x$ when $x$ is an outcome on the equilibrium path.
\end{itemize}
This monotonicity assumption is common in the literature on security
design--See Nachman and Noe (1994), for example. Sometimes the security
is assumed to be monotonic on the whole possible set of payouts. .
I will revisit this issue in Section \ref{sec:Appendix-B-:Robustness}.

The LP can not observe the quality of the chosen project (projects)
or if the GP exerts effort or not. However, the LP can observe whether
the GP invests in the project or in the safe asset. Also, the cash
flow is verifiable at the end of period 2 as well. 

For one project, after the issuance of $s(x)$, there are four possible
strategies by the GP.
\begin{enumerate}
\item Do not invest: The return to GP is $s(I).$
\item Invest with no effort: The return is $ps(R).$
\item Exert effort and invest regardless of quality: $\lambda s(R)+(1-\lambda)ps(R)-c$.
\item Exert effort and invest only in the good project: $\lambda s(R)+(1-\lambda)s(I)-c.$
\end{enumerate}
Clearly, the optimal strategy is the fourth one if there was no agency
friction (I will show later that strategy (4) is also optimal even
in the presence of agency friction.) . Assuming this, here is how
the LP can implement strategy (4). The scheduled payment's system
$(s(I),s(R))$ induce the GP to choose strategy (4) if and only if
it satisfies 
\begin{align}
 & s(R)\geq s(I)+\frac{c}{\lambda}\label{eq:opt-dealbydeal1}\\
 & s(I)\geq ps(R).\label{eq:opt-dealbydeal2}
\end{align}

The first condition insures that the agent exerts effort to obtain
a good project and the second one insures that the agent does not
invest if the quality of the project turns out to be bad. As usual,
we have the participation constraint by the LP which is 
\begin{equation}
E[s_{LP}(x)]=E[x-s(x)]\geq I.\label{eq:PC-DBD}
\end{equation}
The problem faced by the LP can then be written as 
\[
\max_{s(I),s(R)}E[x-s(x)]
\]
where $(s(I),s(R))$ satisfy equations (\ref{eq:opt-dealbydeal1})
and (\ref{eq:opt-dealbydeal2}). Inserting equation (\ref{eq:opt-dealbydeal2})
in equation (\ref{eq:opt-dealbydeal1}) gives
\[
s(R)\geq ps(R)+\frac{c}{\lambda}
\]
hence $s(R)\geq\frac{c}{\lambda(1-p)}$. As a result, the optimal
contract will be $(\frac{c}{\lambda(1-p)},\frac{pc}{(1-p)\lambda}$).
Using equation (\ref{eq:PC-DBD}), The project is funded if and only
if we have 
\begin{equation}
\lambda[R-\frac{c}{\lambda(1-p)}]+(1-\lambda)[I-\frac{pc}{(1-p)\lambda}]\geq I.\label{eq:feasible-No-FNO}
\end{equation}
Under this contract, the profit made by the GP is 
\begin{align}
 & \Pi_{GP}=\Pi=\lambda\frac{c}{\lambda(1-p)}+(1-\lambda)\frac{pc}{\lambda(1-p)}-c\nonumber \\
 & =\frac{pc}{\lambda(1-p)}\label{eq:Profit-GP}
\end{align}
Since the optimal effort/investment strategy is chosen by the GP,
we have $\Pi_{LP}+\Pi_{GP}=\lambda(R-I)-c$. So the profit made by
the LP from the contract on one project is 
\begin{equation}
\Pi_{LP}=\lambda(R-I)-c-\frac{pc}{\lambda(1-p)}.\label{eq:LP-profit-DbD}
\end{equation}
We can see that for higher $p$, the LP makes less profit (and therefore
the GP makes more). This is because, as mentioned before, higher $p$
is associated with more severe adverse selection and it makes it harder
to motivate the GP to invest optimally since the outside option (the
bad project) is more appealing. On the contrary, when $\lambda$ goes
up, the profit goes up for the LP (and down for the GP) because the
chance of success when exerting effort is higher, so less payment
is needed to motivate effort. It is also worth noting that when there
is no bad option for investment by the GP, which means $p=0,$ then
the LP can get the whole surplus of the project. This case is effectively
means that there is no asymmetric information between the LP and the
GP. As a result, in a setting with binary effort, contracting alleviates
all the friction in the model. In section \ref{sec:No-Asymmetric-Information},
I consider implications when there is more variance for effort in
this special important case. 

\begin{figure}
\centering
\begin{tikzpicture}[x=0.65cm, font=\small]

% draw arrow
\coordinate (start) at (-5,0);
\coordinate (end) at (9,0);
\draw [line width=2pt] (start) -- (end);

% You can use `foreach` to improve the following codes
\coordinate (s0) at (-4,0);
\coordinate (t0) at ($(s0)+(0,0.3)$);
\coordinate (s1) at (2,0);
\coordinate (t1) at ($(s1)+(0,0.3)$);
\coordinate (s2) at (8,0);
\coordinate (t2) at ($(s2)+(0,0.3)$);
\coordinate (s'0) at ($(s0)+(0,-0.2)$);

\draw [black, ultra thick ,decorate,decoration={brace,amplitude=5pt},
       xshift=0pt,yshift=-4pt] (-3,0.5)  -- (1,0.5) 
       node [black,midway,above=4pt,xshift=-2pt] { \tiny GP exerts effort.};
% draw ticks
\draw [line width=2pt] (s0) -- (t0);
\node [anchor=south] at (t0.north) {$t=0$};

\draw [line width=2pt] (t1) -- (s1);
\node [anchor=south] at (t1.north) {$t=1$};

\draw [line width=2pt] (t2) -- (s2);
\node [anchor=south] at (t2.north) {$t=2$};

% add texts
\node [anchor=north, align=left, text width=5cm] at (s'0.south) {
\begin{itemize}
\item Securities $s_{GP}$ and $s_{LP}$ \\ are issued.
\item $2I$ is raised.
\end{itemize}
};

\node [anchor=north, align=left, text width=5cm] at (s1.south) {
\begin{itemize}
\item GP realises the type of projects.
\item Investment decision is \\ made.
\end{itemize}
};

\node [anchor=north, align=left, text width=5cm] at (s2.south) {
\begin{itemize}
\item Cash flows are realized.
\item It is distributed according \\ to securities.
\end{itemize}
};

\end{tikzpicture}
\caption{Timeline}
\end{figure}

In order to compare the outcome of the results of different strategies
induced by the LP, first note that the LP never induces strategy (1)
as he personally has access to the safe asset. The second strategy
has always negative NPV since the profit by the LP is 
\begin{align*}
 & p(R-s(R))-I\\
 & \le pR-I<0
\end{align*}
 by assumption \ref{assumption-1}. Finally to optimally induce strategy
(3), note that in this case the LP optimally sets $s(I)=0$ as $I$
is not the outcome of the induced strategy. To induce effort, the
payout should satisfy 
\[
s(R)\geq ps(R)+\frac{c}{\lambda}
\]
 which implies $s(R)\geq\frac{c}{\lambda(1-p)}$. Hence the LP issues
security $(0,\frac{c}{\lambda(1-p)}).$ The profit made by the LP
by this contract is 
\[
(\lambda+(1-\lambda)p)(R-\frac{c}{\lambda(1-p)})-I.
\]
This is less than the profit made by the LP by strategy (4) (equation
(\ref{eq:LP-profit-DbD})) by assumption \ref{assumption-1}. 

In summary, we have the following for investment on one project.

\begin{proposition}[optimal deal-by-deal contract]\label{deal-by-deal}

The optimal strategy that the LP induces the GP to choose is strategy
(4). Moreover, the security $(s(I),s(R))=(\frac{c}{\lambda(1-p)},\frac{pc}{(1-p)\lambda})$
is issued optimally by the LP to fund the project. The funding is
possible if and only if 
\begin{equation}
\Pi_{LP}=\lambda(R-I)-c-\frac{pc}{\lambda(1-p)}\geq0.\label{eq:profit-LP}
\end{equation}

\end{proposition}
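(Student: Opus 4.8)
The plan is to treat this as a two-layer mechanism-design problem: first fix which of the four GP strategies the LP wishes to implement, derive the cheapest security implementing it, and then compare the LP's resulting profit across the four strategies to identify the global optimum. Because all players are risk neutral and the LP's payoff is decreasing in every transfer to the GP, within each strategy the LP simply minimizes the GP's expected compensation subject to the relevant incentive constraints, so each sub-problem is a small linear program in the two unknowns $s(I)$ and $s(R)$.

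First I would pin down the incentive constraints that make strategy (4) the GP's best response. Comparing the payoff of (4) with that of (1) yields exactly $s(R)\geq s(I)+\frac{c}{\lambda}$, the effort constraint \eqref{eq:opt-dealbydeal1}; comparing (4) with (3) yields $s(I)\geq ps(R)$, the constraint \eqref{eq:opt-dealbydeal2} that rules out investing in a bad project. The remaining comparison, (4) versus (2), I would dispatch by transitivity: under \eqref{eq:opt-dealbydeal2} strategy (1) dominates (2) (since not investing, worth $s(I)$, beats investing in a sure-bad project, worth $ps(R)$), and under \eqref{eq:opt-dealbydeal1} strategy (4) dominates (1); hence (4) dominates (2) as well. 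The two displayed inequalities are therefore jointly necessary and sufficient for implementing strategy (4).

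Next I would solve the LP's program for strategy (4). Since the objective $E[x-s(x)]$ decreases as $s(I)$ and $s(R)$ rise, both constraints bind at the optimum. Substituting the binding \eqref{eq:opt-dealbydeal2} into \eqref{eq:opt-dealbydeal1} gives $s(R)\geq\frac{c}{\lambda(1-p)}$, so that $s(R)=\frac{c}{\lambda(1-p)}$ and $s(I)=ps(R)=\frac{pc}{\lambda(1-p)}$. I would then verify that this candidate respects limited liability and monotonicity (both hold, as the payouts are nonnegative and ordered with $p<1$), confirming optimality. The GP's profit \eqref{eq:Profit-GP} follows by substitution, and then the LP's profit \eqref{eq:LP-profit-DbD} follows from the surplus identity $\Pi_{LP}+\Pi_{GP}=\lambda(R-I)-c$.

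The final and most delicate step is to confirm that strategy (4) is the globally best strategy to induce, not merely the one that is optimally implemented. Strategies (1) and (2) are eliminated at once: the LP can self-provide the safe asset, and strategy (2) has negative NPV because $pR<I$ by Assumption \ref{assumption-1}. The genuine work is the comparison with strategy (3), where the LP optimally sets $s(I)=0$ (as $I$ is then off the equilibrium path) and pays $s(R)=\frac{c}{\lambda(1-p)}$, earning $(\lambda+(1-\lambda)p)(R-\frac{c}{\lambda(1-p)})-I$; I would show this falls short of \eqref{eq:LP-profit-DbD} by again invoking $pR<I$. With strategy (4) established as optimal, funding feasibility is exactly the LP participation constraint \eqref{eq:PC-DBD}, which rearranges into \eqref{eq:feasible-No-FNO} and, via the surplus identity, into the stated condition $\Pi_{LP}\geq 0$ of \eqref{eq:profit-LP}. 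I expect this cross-strategy comparison against strategy (3) to be the main obstacle, since that is where the economic trade-off and Assumption \ref{assumption-1} genuinely bind, while the within-strategy optimizations are routine.
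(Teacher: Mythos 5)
Your proposal is correct and follows essentially the same route as the paper: state the two incentive constraints as necessary and sufficient for strategy (4), bind them to get $s(R)=\frac{c}{\lambda(1-p)}$ and $s(I)=ps(R)=\frac{pc}{\lambda(1-p)}$, rule out inducing strategies (1), (2), (3) via Assumption \ref{assumption-1} (the strategy-(3) comparison reduces exactly to $(1-\lambda)(I-pR)>0$), and read feasibility off the LP participation constraint together with the surplus identity. Your additions beyond the paper's text --- the transitivity argument dispatching deviation to strategy (2), and the explicit check of limited liability and monotonicity --- are minor refinements of the same argument, and your derivation also implicitly fixes the swapped ordering of the pair $(s(I),s(R))$ in the proposition's statement.
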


When there are two projects with parameters $(\lambda_{1},p_{1})$
and $(\lambda_{2},p_{2}),$in a deal-by-deal contract, the optimal
contract for one project is written for each of the projects. Therefore
the expected profit by the GP will be 
\[
\Pi_{GP}=\sum_{i=1}^{2}\frac{p_{i}c}{\lambda_{i}(1-p_{i})}.
\]
In the next section, we see how tying the payouts of the projects
together can change the expected payout to the LP (and the GP).

\section{Optimal Portfolio Contracting\label{sec:Optimal-Portfolio-Contracting}}

In this section, I analyze the question of optimal contracting when
a portfolio of projects is chosen by the GP and the payout can depend
on the whole return of the portfolio. Then I compare whole-portfolio
contracting with deal-by-deal contracting to see how the investment
environment can affect the choice of contract by investors. But first
I need to introduce the dependency between projects, which I do in
the next part. 

\subsection{Correlation structure of the portfolio \label{subsec:Correlation-structure}}

When the GP forms a portfolio of investments, not only the return
and quality of each project is important, but the correlation structure
between projects is important as well. In Magro (2018), the correlation
structure in the invested portfolio under different types of contracting
is studied, analyzing the correlation between projects in two dimensions
of industry and geography. 

Here I assume that the correlation between projects is given by a
parameter $0\leq\rho\leq1$ which means that if the GP exerts effort
on both projects, then 
\begin{equation}
\mathbb{P}[\textrm{project\;1\;and\;2\;are\;good\;\ensuremath{\|\;}effort exerted on both projects]=}\rho\min(\lambda_{1},\lambda_{2}).\label{eq:correlation-relation}
\end{equation}
Here I assume that investors can observe $\rho$ and potentially write
a contract conditioned on it. I relax this condition in section \ref{sec:Uninformed-Investor}.The
cost of exerting effort on both projects is twice that of one project,
$2c$. The correlation structure is irrelevant when the contract is
written on a deal-by-deal basis. This is because, by risk neutrality,
deal-by-deal contracting is equivalent to writing a contract with
two different agents. Therefore I only need to study the problem when
the payout depends on the payout of both projects. 

Before going into detail on the whole-portfolio contracting, let me
introduce some preliminary results which are needed later in the discussion.
As in the deal-by-deal case, the GP should not invest in the type
$B$ project. Hence possible optimal outcomes from the projects are
$2I,R+I$ and $2R.$ These correspond to cases in which the GP comes
up with zero, one or two good projects respectively. As a result,
$I$ and $R$ are not possible outcomes if the GP makes optimal investment
decisions. Hence to minimize the incentive for these outcomes, it
is easy to show that the optimal contract satisfies 
\begin{equation}
s(0)=s(I)=s(R)=0.\label{eq:s(I)=00003Ds(R)=00003D0}
\end{equation}

\subsection{Whole-Portfolio Contract \label{subsec:Whole-Portfolio-Contract}}

In this section, I want to see how the optimal contract should be
written when the return is a function of the total payouts of the
projects. This resembles whole-portfolio contracting. I then compare
it to deal-by-deal contracting to find under which parameters each
type of contracting is efficient . 

When writing the contract on the whole portfolio, as we saw in equation
(\ref{eq:s(I)=00003Ds(R)=00003D0}), we have $s(I)=s(R)=s(0).$ Set
\begin{equation}
(x,y,z)=(s(2R),s(R+I),s(2I))\label{eq:payout-abr}
\end{equation}
The LP needs to impose some restrictions on the payment to the GP
to make sure that the GP only invests in good projects. These conditions
are 
\begin{align*}
 & z\geq\max\{p_{1}y,p_{2}y,p_{1}p_{2}x\}\\
 & y\geq\max\{z,p_{1}x,p_{2}x\}\\
 & x\geq\max\{z,y\}.
\end{align*}
These inequalities make sure that the GP will invest in good projects
and only in good projects (hence withholding money from bad projects).
For example, when the agent comes up with two bad projects, the payout
for not investing in any bad project ($z$) is not less than the (expected)
payout if the agent invests in one bad project ($p_{i}y$ for $i=1,2$)
or invests in two bad projects ($p_{1}p_{2}x$). A similar explanation
applies to $y_{i}\geq p_{3-i}x$ for $i=1,2$. These conditions therefore
discourage the GP from making bad investment decisions. Also since
$x\geq y\geq z$, the GP will invest in good projects when they are
available rather than investing in the safe asset. Note that when
the contract satisfies these conditions, the payout to the GP is increasing
on the equilibrium as the GP does not invest in a bad project, hence
satisfying the monotonicity condition. Set $\lambda_{max}=\max[\lambda_{1},\lambda_{2}]$
and $\lambda_{min}=\min[\lambda_{1},\lambda_{2}]$. In addition, the
GP should have incentive to exert effort on both projects. This gives
\begin{align*}
 & \rho\lambda_{min}x+[\lambda_{1}-2\rho\lambda_{min}+\lambda_{2}]y+[1-\lambda_{1}-\lambda_{2}+\rho\lambda_{min}]z\\
 & \geq\max\{z+2c,\lambda_{max}y+(1-\lambda_{max})z+c\}
\end{align*}
The LHS term is the expected payout to the GP if she exerts effort
on both projects. On the RHS we have expected payouts if no effort
is exerted or if it is exerted on only one project. For simplicity
and without loss of generality, assume that $p_{1}\geq p_{2}.$ Note
that in the optimum
\begin{align}
 & \rho\lambda_{min}x+[\lambda_{1}-2\rho\lambda_{min}+\lambda_{2}]y+[1-\lambda_{1}-\lambda_{2}+\rho\lambda_{min}]z\label{eq:optimum-MH}\\
 & =\max\{z+2c,\lambda_{max}y+(1-\lambda_{max})z+c\}.\nonumber 
\end{align}
\tikzstyle{level 1}=[level distance=3.5cm, sibling distance=3.5cm]
\tikzstyle{level 2}=[level distance=3.5cm, sibling distance=2cm]

% Define styles for bags and leafs
\tikzstyle{bag} = [text width=4em, text centered]
\tikzstyle{end} = [circle, minimum width=3pt,fill, inner sep=0pt]

% The sloped option gives rotated edge labels. Personally
% I find sloped labels a bit difficult to read. Remove the sloped options
% to get horizontal labels. 
\begin{figure}
	\centering
\begin{tikzpicture}[grow=right, sloped]
\node[bag] {GP}
    child {
        node[bag] {Quality $\theta_2$ Project}        
            child {
                node[end, label=right:
                    {Invest in the safe asset}] {}
                edge from parent
                %node[above] {$W$}
                %node[below]  {$\frac{4}{9}$}
            }
            child {
                node[end, label=right:
                    {Invest in the project 2}] {}
                edge from parent
               % node[above] {$B$}
               %node[below]  {$\frac{5}{9}$}
            }
            edge from parent 
           % node[above] {$W$}
            node[below]  {effort $e_2$}
    }
    child {
        node[bag] {Quality $\theta_1$ Project}        
        child {
                node[end, label=right:
                    {Invest in the safe asset}] {}
                edge from parent
               % node[above] {$B$}
                %node[below]  {$\frac{3}{9}$}
            }
            child {
                node[end, label=right:
                    {Invest in the project 1}] {}
                edge from parent
               % node[above] {$W$}
                %node[below]  {$\frac{6}{9}$}
            }
        edge from parent         
            node[above] {effort $e_1$}
            %node[below]  {$\frac{3}{7}$}
    };
\end{tikzpicture}
\caption{GP Problem}
\end{figure}
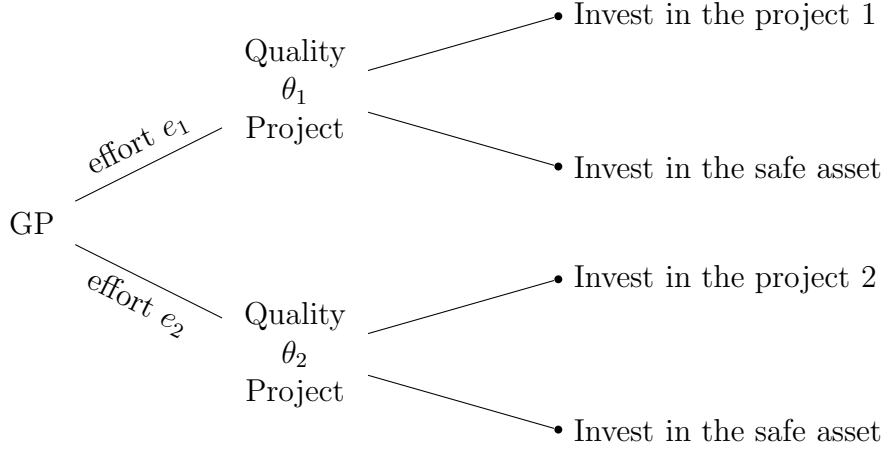
%\end{document}

This is a typical phenomenon when dealing with moral hazard issues.
This means that the expected payout to the GP is binding by the condition
which induces exerting effort on both projects, otherwise the investor
can lower payment in some states of the world without changing GP
incentives. More formally, if equation (\ref{eq:optimum-MH}) does
not hold, the transformation $z\rightarrow z-\epsilon$ for small
enough $\epsilon$ should violate the optimality conditions. Otherwise
the LP can have a feasible contract with less expected payout to the
GP. This means that $z=p_{1}y,$ hence $z$ can not be reduced. Similarly
$y\rightarrow y-\epsilon$ should violate the conditions as well,
hence one gets $y=p_{1}x$. But then in any case $x\rightarrow x-\epsilon$
is possible because by the last two equalities we have $x>y\geq z.$
Therefore the LP problem can be written as 
\begin{align*}
 & \min_{x,y,z}\alpha x+\beta y+\gamma z\\
 & \alpha x+\beta y+\gamma z=\max\{z+2c,\lambda_{max}y+(1-\lambda_{max})z+c\}\\
 & \ensuremath{x\geq y\geq z\geq p_{1}y\geq p_{1}^{2}x}
\end{align*}
where $(\alpha,\beta,\gamma)=(\rho\lambda_{min},\lambda_{1}-2\rho\lambda_{min}+\lambda_{2},1-\lambda_{1}-\lambda_{2}+\rho\lambda_{min})$.
In a similar vein, conditions that induce the choice of optimal investment
strategies by the GP are binding as well. Therefore we have the following
proposition.

\begin{proposition}[Optimal whole-portfolio contract]\label{prop: Optimal-whole-portfolio}

The optimal whole-portfolio contract satisfies 
\begin{equation}
z=p_{1}y=p_{1}^{2}x\label{eq:optimal-whole}
\end{equation}
 where 
\[
z=\frac{2c}{\beta\frac{1-p_{1}}{p_{1}}+\alpha\frac{1-p_{1}^{2}}{p_{1}^{2}}}
\]
 if 
\[
\rho\geq\rho^{*}=\frac{\lambda_{max}-\lambda_{min}}{\lambda_{min}(\frac{1}{p_{1}}-1)}
\]
 otherwise 
\[
z=\frac{c}{\frac{\alpha(1-p_{1}^{2})}{p_{1}^{2}}+\beta\frac{1-p_{1}}{p_{1}}-\lambda_{max}\frac{1-p_{1}}{p_{1}}}
\]

\end{proposition}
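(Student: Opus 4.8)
The plan is to solve the linear program displayed immediately before the statement, whose feasible set is cut out by the moral-hazard (effort) inequality on one side and the chain of investment constraints $x\geq y\geq z\geq p_{1}y\geq p_{1}^{2}x$ on the other, and to read off which constraints bind at the optimum. I would first record that the three objective coefficients are nonnegative: $\alpha=\rho\lambda_{min}\geq0$; $\beta=\lambda_{1}+\lambda_{2}-2\rho\lambda_{min}\geq0$ because $\rho\leq1$ forces $2\rho\lambda_{min}\leq\lambda_{1}+\lambda_{2}$; and $\gamma=1-\lambda_{1}-\lambda_{2}+\rho\lambda_{min}\geq0$ since it is exactly the probability of obtaining no good project. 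Moreover $\alpha+\beta+\gamma=1$, so I may substitute $\gamma=1-\alpha-\beta$ freely. Because the LP minimizes a nonnegative combination of $x,y,z$, it pushes every coordinate as small as the constraints permit.

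The first substantive step is to show the two ``downward'' investment constraints bind, i.e. $z=p_{1}y$ and $y=p_{1}x$, which is precisely equation (\ref{eq:optimal-whole}). I would make the perturbation sketched in the text precise: at the optimum the effort inequality is tight (otherwise a uniform downward scaling of $(x,y,z)$ toward the origin strictly lowers the objective while preserving the homogeneous investment constraints, until the effort constraint is met), so by (\ref{eq:optimum-MH}) the optimal objective value equals $\max\{z+2c,\ \lambda_{max}y+(1-\lambda_{max})z+c\}$, a function of $(y,z)$ alone. Since this target carries the strictly positive constants $2c$ and $c$ while the constraint $y\geq p_{1}x$ rearranges to a ceiling on $x$ proportional to $y$, the forced value of $x$ becomes infeasible for small $(y,z)$ unless $y=p_{1}x$; the same comparison in the pair $(y,z)$ forces $z=p_{1}y$. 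This is the main obstacle: one must verify it is indeed these constraints (and not the orderings $x\geq y$ or $y\geq z$) that bind, which I would pin down either by this bookkeeping or, more safely, by checking the KKT/vertex conditions of the LP and confirming that the candidate with $z=p_{1}y=p_{1}^{2}x$ dominates the alternative vertices.

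With $z=p_{1}y=p_{1}^{2}x$ established, I would substitute $y=z/p_{1}$ and $x=z/p_{1}^{2}$ into the objective, obtaining the expected payout $z(\alpha/p_{1}^{2}+\beta/p_{1}+\gamma)$, and then impose each effort inequality separately. Using $\gamma=1-\alpha-\beta$, the no-effort deviation $\alpha x+\beta y+\gamma z\geq z+2c$ collapses to $z\,[\alpha\frac{1-p_{1}^{2}}{p_{1}^{2}}+\beta\frac{1-p_{1}}{p_{1}}]\geq2c$, and the one-project deviation collapses to $z\,[\alpha\frac{1-p_{1}^{2}}{p_{1}^{2}}+\beta\frac{1-p_{1}}{p_{1}}-\lambda_{max}\frac{1-p_{1}}{p_{1}}]\geq c$. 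Each is a lower bound on $z$, so the optimal $z$ is the larger of the two candidate values, which are exactly the two formulas in the statement.

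Finally I would locate the regime boundary by comparing the candidates. Writing $D=\alpha\frac{1-p_{1}^{2}}{p_{1}^{2}}+\beta\frac{1-p_{1}}{p_{1}}$, the inequality $2c/D\geq c/(D-\lambda_{max}\frac{1-p_{1}}{p_{1}})$ reduces, after clearing the (positive) denominators and dividing by $\frac{1-p_{1}}{p_{1}}$, to $\alpha\frac{1+p_{1}}{p_{1}}+\beta\geq2\lambda_{max}$; substituting $\alpha=\rho\lambda_{min}$, $\beta=\lambda_{1}+\lambda_{2}-2\rho\lambda_{min}$ and using $\lambda_{1}+\lambda_{2}-2\lambda_{max}=\lambda_{min}-\lambda_{max}$ turns this into $\rho\lambda_{min}\frac{1-p_{1}}{p_{1}}\geq\lambda_{max}-\lambda_{min}$, i.e. $\rho\geq\rho^{*}$. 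Hence the no-effort constraint binds and $z=2c/D$ exactly when $\rho\geq\rho^{*}$, and otherwise the one-project deviation binds and $z$ takes the second form, which completes the proof.
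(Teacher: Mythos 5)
Your overall route is the same as the paper's: show the effort constraint is tight, show the chain constraints $z\geq p_{1}y$ and $p_{1}y\geq p_{1}^{2}x$ bind so that the optimum lies on the ray $z=p_{1}y=p_{1}^{2}x$, then read the optimal $z$ off as the larger of the two lower bounds coming from the no-effort and one-project deviations. Your algebra in the last two steps is correct and matches the paper (including the reduction of the regime boundary to $\rho\geq\rho^{*}$), and your scaling argument for tightness of the effort constraint is valid, indeed cleaner than the paper's sequential perturbation in $z$, then $y$, then $x$.

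The gap is exactly at the step you yourself flag as ``the main obstacle'': you never prove that the binding constraints are $z\geq p_{1}y$ and $y\geq p_{1}x$, and neither substitute you offer closes this. The observation that the forced value of $x$ is infeasible for small $(y,z)$ only shows the feasible set stays away from the origin; it does not identify \emph{which} of the four chain constraints binds, and in particular it does not rule out an optimum with $y=z$ while $z>p_{1}y$, so ``the same comparison in the pair $(y,z)$'' is not an argument, and deferring to unexecuted KKT or vertex enumeration is not a proof. The missing idea (the paper's second preliminary observation) is a coefficient comparison: since $\gamma=1-\lambda_{1}-\lambda_{2}+\rho\lambda_{min}\leq1-\lambda_{max}\leq1$ (using $\rho\leq1$), lowering $z$ by $\epsilon$ with $x,y$ held fixed lowers the GP's on-path payout by $\gamma\epsilon$ while lowering the two deviation payoffs by $\epsilon$ and $(1-\lambda_{max})\epsilon$ respectively, so both effort constraints are preserved and the objective falls by $\gamma\epsilon$; hence any optimum (some optimum, if $\gamma=0$) has $z=p_{1}y$. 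Once this and tightness of the effort constraint are in place, $x$ is pinned down by $z$, the objective is increasing in $z$, and the only remaining lower bound on $z$ comes from $z\geq p_{1}^{2}x$, which therefore binds at the minimum; this delivers $z=p_{1}y=p_{1}^{2}x$ with no vertex bookkeeping, after which the rest of your proposal goes through verbatim.
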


Here is the intuition behind this proposition. When the investor writes
the contract, he wants to choose the maximal value for $x$ to give
the biggest incentive to the agent to exert effort. However, because
of adverse selection, the prize for success cannot be too large, as
it leads to inefficient investment decisions by the GP (investing
in type $B$ projects). The maximum possible value for $x$ is $\frac{z}{p_{1}p_{2}}$
and maximum attains if and only if equation \ref{eq:optimal-whole}
holds. The two different regimes in the proposition correspond to
the fact that the expected payout to the GP ($\alpha x+\beta y+\gamma z)$,
becomes equal to $z+2c$ or $\lambda_{max}y+(1-\lambda_{max})z+c.$
When the correlation is high, the GP either prefers to exert no effort
or to exert effort on both projects because of high dependency between
the success in both projects. Hence for high values of $\rho,$it
is needed to pay enough to the GP such that the GP exerts any effort
at all. This amount is $s(2I)$ which is the reserve value for the
GP. However, when the correlation is low, the payout should compensate
for the lower level of inter-dependency between projects. So the payout
should be high enough for the case success in both projects so that
the GP does not find it beneficial to exert effort on only the easier
project (corresponding to $\lambda_{max})$. 

Note that, as we mentioned, the security as defined here is increasing
on the set of possible outcomes on the equilibrium path. However since
$s_{GP}(R)=0$ and $s_{GP}(2I)>0$, when $R>2I$, optimal security
is not increasing on all possible outcomes. This stems from the fact
that the LP wants to push the GP to invest in only good projects and
reserve the money if the project is bad. The non-monotonicity of the
optimal security has been observed before in the literature, like
Manso (2011). The mechanism in Manso (2011) which leads to this phenomena
is the fact that the contract is written in a way to motivate experimentation
by the agent. Hence the principal has to reward for failure so that
the agent can take the risk. However, my setting has quite an opposite
mechanism--the monotonicity arises because the GP wants to make the
LP take less risk. For example, suppose the principal wants the agent
to invest in a safe asset, and the agent goes and invests in a bad
project instead. So if the payoff is high, it means that the agent
deviated from the optimal strategy and as a result she gets punished. 

The derivative of $\alpha\frac{1+p_{1}}{p_{1}}+\beta$ with respect
to $\rho$ equals to 
\[
\lambda_{min}(\frac{1}{p_{1}}-1)>0
\]
 hence the payout to the GP decreases as $\rho$ increases. Also note
that the total payout of projects (which is $E[s_{GP}]+E[s_{LP}]$)
equals to 
\[
\alpha2R+\beta(R+I)+\gamma2I
\]
 it has derivative (w.r.t $\rho$)
\[
2\lambda_{min}R-2\lambda_{min}(R+I)+2\lambda_{min}I=0
\]
so total payout of projects is constant. Therefore we have the following 

\begin{proposition}[Comparing whole-portfolio with deal-by-deal]\label{prop: WP-VS-DBD-Reputable}

As the correlation $\rho$ increases, the expected payout to the GP
decreases and the expected payout to the LP increases. Therefore,
for admissible values ($\lambda_{1},\lambda_{2},p_{1},p_{s}),$ there
is $\rho^{**}=\rho(\lambda_{1},\lambda_{2},p_{1},p_{s}),$such that
for $\rho>\rho^{**},$ whole-portfolio contracting is preferred by
the LP and for $\rho<\rho^{**}$deal-by-deal contracting is preferred
by the LP. In addition if $\rho\geq\rho^{*}$, whole-portfolio contracting
is better for the LP ( equivalently $\rho^{**}\leq\rho^{*}$).

\end{proposition}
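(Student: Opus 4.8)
The plan is to reduce the entire comparison to a single statement about the GP's expected profit, exploiting the fact that both the optimal whole-portfolio contract and the two deal-by-deal contracts implement the first-best investment strategy and therefore split the \emph{same} total surplus. First I would record that, since the contract of Proposition \ref{prop: Optimal-whole-portfolio} induces effort on both projects and investment in good projects only, the expected cash flow net of capital and effort is $\alpha 2R+\beta(R+I)+\gamma 2I-2I-2c$, which collapses (using $2\alpha+\beta=\lambda_1+\lambda_2$) to $W=(\lambda_1+\lambda_2)(R-I)-2c$. This is precisely the surplus of two deal-by-deal contracts, and the computation in the text showing the gross payout has zero $\rho$-derivative tells us it is independent of $\rho$. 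Hence $\Pi_{LP}^{WP}(\rho)+\Pi_{GP}^{WP}(\rho)=\Pi_{LP}^{DBD}+\Pi_{GP}^{DBD}=W$, so it suffices to track the GP's profit: $\Pi_{LP}^{WP}$ is increasing exactly when $\Pi_{GP}^{WP}$ is decreasing, and the LP prefers whole-portfolio contracting iff $\Pi_{GP}^{WP}(\rho)\le\Pi_{GP}^{DBD}$.

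Next I would establish monotonicity of $\Pi_{GP}^{WP}(\rho)$. Because the binding incentive constraint (\ref{eq:optimum-MH}) pins the GP's gross payout to its right-hand side, the net profit is $\Pi_{GP}^{WP}=z$ in the high-correlation regime $\rho\ge\rho^*$ and $\Pi_{GP}^{WP}=\lambda_{max}y+(1-\lambda_{max})z-c$ in the low-correlation regime. Writing both via $D:=\alpha\frac{1+p_1}{p_1}+\beta$ and $z=p_1 y$, these become $z=\frac{2cp_1}{(1-p_1)D}$ and $z=\frac{cp_1}{(1-p_1)(D-\lambda_{max})}$ respectively. Since the text already gives $\frac{dD}{d\rho}=\lambda_{min}(\frac{1}{p_1}-1)>0$, in each piece $z$ and hence $\Pi_{GP}^{WP}$ is strictly decreasing, after a short check that $D-\lambda_{max}>0$ on $[0,\rho^*]$ keeps the low-regime expression well defined. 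I would then verify continuity at $\rho=\rho^*$: both branches evaluate to $\frac{cp_1}{(1-p_1)\lambda_{max}}$ there, so $\Pi_{GP}^{WP}$ is strictly decreasing across the whole range. This yields the first sentence of the proposition, and via the surplus identity the existence of a threshold $\rho^{**}$ follows from the intermediate value theorem applied to the continuous, strictly increasing $\Pi_{LP}^{WP}(\rho)$ against the constant $\Pi_{LP}^{DBD}$.

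Finally, for $\rho^{**}\le\rho^*$ I would compute $\Pi_{GP}^{WP}(\rho^*)$ explicitly and compare it to $\Pi_{GP}^{DBD}=\frac{p_1 c}{\lambda_1(1-p_1)}+\frac{p_2 c}{\lambda_2(1-p_2)}$. Substituting the formula for $\rho^*$ gives $D(\rho^*)=2\lambda_{max}$, whence $\Pi_{GP}^{WP}(\rho^*)=\frac{cp_1}{(1-p_1)\lambda_{max}}$. Since $\lambda_1\le\lambda_{max}$, the first deal-by-deal term alone already dominates, $\frac{p_1 c}{\lambda_1(1-p_1)}\ge\frac{p_1 c}{\lambda_{max}(1-p_1)}$, and the second term is nonnegative, so $\Pi_{GP}^{WP}(\rho^*)\le\Pi_{GP}^{DBD}$. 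Thus the LP already weakly prefers whole-portfolio contracting at $\rho^*$, and by the monotonicity above this preference persists for all $\rho\ge\rho^*$, forcing $\rho^{**}\le\rho^*$.

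The hard part will be the piecewise structure rather than any single calculation: the argument only closes once I confirm that the two branches of the optimal contract coincide at $\rho^*$ (so $\Pi_{GP}^{WP}$ is genuinely monotone, not merely monotone on each piece) and that $D-\lambda_{max}$ stays positive on the low regime. The concluding comparison is clean precisely because it reduces to $\lambda_1\le\lambda_{max}$ together with the ordering $p_1\ge p_2$ assumed without loss of generality, which is exactly where the heterogeneity of the projects drives the result.
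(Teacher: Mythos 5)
Your proposal is correct and follows essentially the same route as the paper: reduce the comparison to the GP's expected payout via the constant, contract-independent total surplus, establish monotonicity through $\frac{dD}{d\rho}=\lambda_{min}(\frac{1}{p_1}-1)>0$, and settle $\rho^{**}\le\rho^*$ by comparing $\frac{cp_1}{(1-p_1)\lambda_{max}}$ against the deal-by-deal rent using $\lambda_1\le\lambda_{max}$. Your explicit verification of continuity of $\Pi_{GP}^{WP}$ at $\rho=\rho^*$ and of $D-\lambda_{max}>0$ on the low-correlation regime makes the piecewise argument slightly more airtight than the paper's, which leaves these points implicit, but the substance is identical.
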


The intuition for the proposition above comes from the fact that when
the correlation between projects is higher, it becomes easier to encourage
the GP to exert effort on both projects since success in one project
increases the chance of success in the other one. As a result, the
LP needs to pay less to motivate effort by the GP, hence the LP makes
more profit because the total payout of projects is the same for all
$\rho$. Since the deal-by-deal contract is independent from the correlation,
from the monotonicity of payout with respect to correlation, we can
see that if the LP prefers whole-portfolio contracting to deal-by-deal
contracting for a given $\rho,$then as $\rho$ goes up it is still
the case. As I pointed out in the introduction, this result observed
empirically in Magro (2018). There the author shows that deal-by-deal
compensation induces greater heterogeneity in portfolio investments.
So the proposition above rationalizes this finding. In section \ref{sec:Appendix-B-:Robustness},
I show that this result holds when a strong form of monotonicity is
imposed on the security as well.

Whole-portfolio contracting does not depend on $p_{2}.$ This is because
since $p_{1}\geq p_{2},$ the first project has more severe adverse
selection problem compared to the second one. Therefore when information
asymmetry constraint binds for the first project, it is already alleviate
the adverse selection for the second project as well. Mathematically
speaking when $z\geq p_{1}y$ then already we have $z>p_{2}y$ as
well. When $\rho$ is large enough, projects are similar to each other
and as we saw in Proposition \ref{prop: WP-VS-DBD-Reputable}, whole-portfolio
contracting is more appealing for the LP. This is because, in this
case, bundling efforts together gives the LP a big enough benefit
that makes up for the loss which comes from having inefficient treatment
of adverse selection (in contrast to the deal-by-deal contract which
handles this issue efficiently). However for smaller values of $\rho,$
the comparison of benefiting from bundling effort is smaller than
the loss of sub-optimal handling of the information asymmetry problem.
Recall that the expected payout to the GP in the deal-by-deal case
is
\[
2c+\Pi_{GP}=2c+\sum_{i=1}^{2}\frac{p_{i}c}{\lambda_{i}(1-p_{i})}
\]
which is increasing in $p_{2}.$ Therefore the discussion above implies
the following.

\begin{proposition}[Asymmetry of information VS moral hazard]\label{prop:AI-VS-MH-Reputable}

For given values $(\lambda_{1},\lambda_{2},p_{1},\rho)$ , there is
$p_{2}^{*}=p_{2}(\lambda_{1},\lambda_{2},p_{1},\rho)$ such that for
$p_{2}<p_{2}^{*}$ deal-by-deal contracting is better for the LP and
for $(p_{1}\geq)p_{2}>p_{2}^{*}$ whole-portfolio contracting. When
$\rho>\rho^{**},p_{2}^{*}=0.$

\end{proposition}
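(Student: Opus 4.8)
The plan is to reduce the LP's choice between the two contracts to a one-dimensional comparison of the rent left to the GP, and then exploit the opposite monotonicities of that rent in $p_2$ under the two regimes. The first step is to record that both contracts induce the \emph{same} equilibrium investment strategy (exert effort on both projects, invest only in good ones), so the gross expected output coincides. Indeed, summing $\lambda_i R + (1-\lambda_i)I$ over $i$ in the deal-by-deal case and evaluating $\alpha\cdot 2R + \beta(R+I) + \gamma\cdot 2I$ with $(\alpha,\beta,\gamma)$ as in Proposition \ref{prop: Optimal-whole-portfolio} both give $(\lambda_1+\lambda_2)(R-I)+2I$, while each contract carries effort cost $2c$ and outlay $2I$. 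Hence the total surplus $S=(\lambda_1+\lambda_2)(R-I)-2c$ is identical across regimes and $\Pi_{LP}=S-\Pi_{GP}$ in each. Consequently the LP prefers whole-portfolio contracting if and only if the GP's expected payout is smaller under it.

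Second, I would invoke the two monotonicity facts already isolated in the text. On the deal-by-deal side the GP's expected payout is $2c+\sum_{i=1}^{2}\frac{p_i c}{\lambda_i(1-p_i)}$, which is continuous and, since $p\mapsto\frac{p}{1-p}$ has derivative $\frac{1}{(1-p)^2}>0$, strictly increasing in $p_2$ on $[0,p_1]$. On the whole-portfolio side the optimal contract of Proposition \ref{prop: Optimal-whole-portfolio} is built from $z=p_1 y=p_1^2 x$ and coefficients $(\alpha,\beta,\gamma)$ none of which involve $p_2$; thus the GP's expected payout under whole-portfolio contracting is \emph{constant} in $p_2$. Define $\Delta(p_2)$ to be the deal-by-deal GP payout minus the whole-portfolio GP payout; then $\Delta$ is continuous and strictly increasing in $p_2$, and by the first step the LP prefers whole-portfolio exactly when $\Delta(p_2)\geq 0$.

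The threshold then follows from the intermediate value theorem together with strict monotonicity: if $\Delta(0)\geq 0$ there is no crossing and whole-portfolio is optimal throughout, so $p_2^*=0$; otherwise $\Delta(0)<0$ and, provided $\Delta(p_1)>0$, there is a unique $p_2^*\in(0,p_1)$ with $\Delta(p_2^*)=0$, deal-by-deal being preferred for $p_2<p_2^*$ and whole-portfolio for $p_2>p_2^*$ (if instead $\Delta$ stays negative, deal-by-deal is preferred on the whole range). For the final clause I would use Proposition \ref{prop: WP-VS-DBD-Reputable}: the whole-portfolio GP payout is decreasing in $\rho$, while the deal-by-deal payout is independent of $\rho$. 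Raising $\rho$ therefore shifts the (horizontal-in-$p_2$) whole-portfolio rent downward against the fixed increasing deal-by-deal curve; once $\rho$ exceeds the Proposition \ref{prop: WP-VS-DBD-Reputable} crossover value at $p_2=0$, denoted $\rho^{**}$, the whole-portfolio rent already lies below the deal-by-deal rent at $p_2=0$, hence below it for all $p_2\geq 0$, giving $\Delta\geq 0$ everywhere and $p_2^*=0$.

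I expect the main obstacle to be two bookkeeping points rather than any deep difficulty. The first is justifying that the surplus is genuinely regime-independent: this relies on both contracts implementing strategy (4), so that realized output never depends on $p_1,p_2$, and on the $\rho$-terms canceling in the gross-output computation, a cancellation already used to show the total payout is constant in $\rho$. The second is keeping the final clause non-circular: since the $\rho^{**}$ of Proposition \ref{prop: WP-VS-DBD-Reputable} itself depends on $p_2$, I must fix its meaning as the crossover value at $p_2=0$ and argue monotonically outward from there, rather than treating $\rho^{**}$ as a moving target while $p_2$ varies.
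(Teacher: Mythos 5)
Your proposal is correct and takes essentially the same route as the paper, whose (in-text) argument for this proposition is exactly the comparison you set up: the whole-portfolio rent from Proposition \ref{prop: Optimal-whole-portfolio} depends only on $p_{1}$ (since the binding adverse-selection constraint is $z=p_{1}y$), the deal-by-deal rent $2c+\sum_{i}\frac{p_{i}c}{\lambda_{i}(1-p_{i})}$ is increasing in $p_{2}$, and the total surplus is regime-independent because both contracts implement strategy (4), so the LP's ranking is decided by the GP's rent alone. Your explicit verification of the surplus identity, the handling of the no-crossing edge cases, and the fixing of $\rho^{**}$ as the crossover at $p_{2}=0$ tighten points the paper leaves implicit, but the underlying argument is the same.
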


While Proposition \ref{prop: WP-VS-DBD-Reputable} resolves the comparison
between whole-portfolio contracting and deal-by-deal in terms of correlation,
Proposition \ref{prop:AI-VS-MH-Reputable} helps us to understand
the comparison in terms of information asymmetry. Here is the intuition
behind this statement. As mentioned before, the investor should take
into account the loss of efficient handling of the adverse selection
problem. The term $p_{1}-p_{2}$ measures the difference between the
adverse selection issues that two projects are subject to. When $p_{1}-p_{2}$
is large (equivalently $p_{2}$ is small), the heterogeneity of asymmetry
of information between the two projects is large. As a result, it
is more efficient to have a deal-by-deal contract for better handling
of this issue. Whereas for large $p_{2}$ (small $p_{1}-p_{2}$),
the loss of efficiency on this issue is negligible, hence whole-portfolio
contracting is better. 

Whenever $p_{2}^{*}=0,$ whole-portfolio contracting is dominant for
the set of parameters given. When $\lambda_{max}=\lambda_{1},$then
$p_{2}^{*}>0$ whenever $\rho<\rho^{*}$. However when the moral hazard
problem is more severe in the first project as well (i.e $\lambda_{max}=\lambda_{2}),$then
for a larger set of $\rho$, whole-portfolio contracting is dominant.
In this case, the investor uses the payout on the second project,
which dominates the first in both moral hazard and information asymmetry
aspects, as a prize to motivate GP to exert effort on the first project.
From the proof of the Propositions \ref{prop: WP-VS-DBD-Reputable},
the equation that computes $p_{2}^{*}$ is given by (when $\rho<\rho^{*}$)
\begin{align*}
 & \lambda_{max}\frac{z}{p_{1}}+(1-\lambda_{max})z+c=2c+\sum_{i=1}^{2}\frac{p_{i}c}{\lambda_{i}(1-p_{i})}\\
 & z=\frac{c}{\frac{\alpha(1-p_{1}^{2})}{p_{1}^{2}}+\beta\frac{1-p_{1}}{p_{1}}-\lambda_{max}\frac{1-p_{1}}{p_{1}}}
\end{align*}
Relative to other variables, we have the following Proposition.

\begin{proposition}[Comparative Statics]\label{prop:Comparative-statics}

In the region $\rho>\rho^{*},$expected payout to the GP is increasing
in $p_{1}$ and is decreasing in $\lambda_{max}$ and $\lambda_{min}.$
For $\rho<\rho^{*},$expected payout to the GP is increasing with
respect to $p_{1}$ and $\lambda_{max}$ and decreasing with respect
to $\lambda_{min}$. 

\end{proposition}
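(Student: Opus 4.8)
The plan is to reduce the expected payout to the GP, in each regime, to an explicit closed-form function of $(p_{1},\lambda_{max},\lambda_{min},\rho)$ and then sign its partial derivatives one at a time. Starting from Proposition \ref{prop: Optimal-whole-portfolio}, on the equilibrium path the gross expected payout to the GP equals the binding right-hand side of the effort constraint (\ref{eq:optimum-MH}): it is $z+2c$ when $\rho\ge\rho^{*}$ and $\lambda_{max}y+(1-\lambda_{max})z+c$ when $\rho<\rho^{*}$, with $y=z/p_{1}$ and $x=z/p_{1}^{2}$. Since the payout net of the effort cost differs from the gross payout only by the constant $2c$, the two notions share the same comparative statics, so I may work with whichever is convenient.

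The simplifying device is to set $A:=\frac{1-p_{1}}{p_{1}}$ and exploit the identities $\frac{1-p_{1}^{2}}{p_{1}^{2}}=A\cdot\frac{1+p_{1}}{p_{1}}$ and $\frac{1+p_{1}}{p_{1}}-2=A$. Substituting $\alpha=\rho\lambda_{min}$ and $\beta=\lambda_{max}+\lambda_{min}-2\rho\lambda_{min}$ into the denominators supplied by Proposition \ref{prop: Optimal-whole-portfolio}, both collapse to clean products: the high-correlation denominator becomes $D_{H}:=A\big[\lambda_{max}+\lambda_{min}(1+\rho A)\big]$, so the payout is $\frac{2c}{D_{H}}$ plus a constant, while the low-correlation denominator becomes $D_{L}:=A\lambda_{min}(1+\rho A)$, so the payout is $\frac{c(1+\lambda_{max}A)}{A\lambda_{min}(1+\rho A)}$ plus a constant. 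The key structural observation is that $A$ depends only on $p_{1}$ and is strictly decreasing in it, whereas $\lambda_{max}$ and $\lambda_{min}$ do not enter $A$.

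Most of the signs are then immediate. In the region $\rho>\rho^{*}$ the payout moves inversely to $D_{H}$; since $A>0$ and $1+\rho A>0$, we have $\partial D_{H}/\partial\lambda_{max}=A>0$ and $\partial D_{H}/\partial\lambda_{min}=A(1+\rho A)>0$, so the payout is decreasing in both $\lambda_{max}$ and $\lambda_{min}$, and because $A$ is decreasing in $p_{1}$ the product $D_{H}$ falls while the payout rises in $p_{1}$. In the region $\rho<\rho^{*}$, $\lambda_{min}$ appears only through the factor $1/\lambda_{min}$, giving a decreasing payout, and $\lambda_{max}$ appears only in the numerator term $(1+\lambda_{max}A)$ with positive coefficient $A$, giving an increasing payout.

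The only step requiring genuine computation is the dependence on $p_{1}$ in the low-correlation regime, since $A$ there sits in both the numerator and the denominator. I would view the payout as a function of $A$ alone, namely $P(A)=\frac{1+\lambda_{max}A}{A(1+\rho A)}$, and differentiate: the quotient rule yields a numerator equal to $-\big(1+2\rho A+\lambda_{max}\rho A^{2}\big)$, which is unambiguously negative. Hence $P$ is strictly decreasing in $A$, and since $A$ is strictly decreasing in $p_{1}$, the payout is increasing in $p_{1}$. This sign computation is the main (and essentially only) obstacle, and it resolves cleanly because the competing terms in the quotient-rule numerator cancel to leave a manifestly negative expression; everything else follows by inspection from the factored forms $D_{H}$ and $D_{L}$.
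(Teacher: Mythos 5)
Your proposal is correct and takes essentially the same route as the paper: both proofs work directly from the closed-form contract in Proposition \ref{prop: Optimal-whole-portfolio}, identify the GP's expected payout as $z+2c$ (for $\rho>\rho^{*}$) or $\bigl(1+\lambda_{max}\tfrac{1-p_{1}}{p_{1}}\bigr)z+c$ (for $\rho<\rho^{*}$), and sign the partial derivatives of the resulting expressions parameter by parameter. Your substitution $A=\tfrac{1-p_{1}}{p_{1}}$ and the factored denominators $D_{H}=A[\lambda_{max}+\lambda_{min}(1+\rho A)]$ and $D_{L}=A\lambda_{min}(1+\rho A)$ are just cleaner bookkeeping for the same computation (and your quotient-rule treatment of the $p_{1}$-dependence in the low-correlation regime is, if anything, a bit more complete than the paper's, which only explicitly differentiates the $\tfrac{1-p_{1}}{p_{1}}z$ piece).
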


Here is the intuition behind Proposition \ref{prop:Comparative-statics}.
In both regimes of $\rho,$ when $p_{1}$ increases, the information
asymmetry to be overcome by the LP worsens as the GP finds it more
profitable to invest in bad projects. As a result, the expected payout
to the GP increases when $p_{1}$ increases to compensate for information
rent by the GP. With respect to $\lambda_{min},$ as $\lambda_{min}$
increases, it becomes easier for the LP to motivate the GP to exert
effort for the harder project, hence the expected payout is decreasing.
However with respect to $\lambda_{max},$ the relation to the payout
depends on which regime $\rho$ is in. In the high correlation regime
($\rho>\rho^{*}),$as we saw above, the GP has to compensate as much
as needed to make the GP exert any effort. As in a classical moral
hazard problem, when the task becomes easier the expected payout to
the agent decreases. However, in the regime $\rho<\rho^{*},$ the
LP has to compensate the GP for the strategy of exerting effort only
on the easier project. This outside option's payout increases as $\lambda_{max}$
increases, hence the LP has to compensate the GP more for not choosing
this strategy.

\section{Non-Reputable GP\label{sec:Non-Reputable-GP}}

Following Axelson et al (2009) and as mentioned in section \ref{sec:Model},
when the GP is not reputable, the LP cannot distinguish the non-reputable
GPs from a fly-by-night operator (FNO). In this case the following
assumption should be imposed on the securities to discourage FNOs
from getting the investment outlay $2I$ and enjoy the managerial
fee $s(2I)$ without exerting any effort..
\begin{itemize}
\item $s_{GP}(x)=0$ for $x\leq K$ where $K$ is the committed capital
(FNO assumption).
\end{itemize}
This assumption was first introduced in Axelson et al (2009) and has
been used in subsequent works (for example Fang (2019)). Here I investigate
how enforcing this condition can change the contract. Therefore, in
essence we have a separating equilibrium in which a contract between
the LP and a non-reputable GP satisfies the FNO assumption while a
contract between the LP and a well-established (i.e., reputable) GP
does not need this condition. 

First note that with this assumption, first-best can not be implemented
for a single project. This happens since $s_{GP}(I)=0$, the GP will
invest in a project no matter the quality of the project as there
is no reward for not investing in a type $B$ project, hence he implements
strategy (3) discussed in section \ref{sec:Model}. The GP exerts
effort if 
\begin{equation}
s^{FNO}(R)\geq\frac{c}{\lambda(1-p)}.\label{eq:opt-FNO}
\end{equation}
It is feasible to fund through this contract if and only if 
\begin{equation}
R-\frac{c}{\lambda(1-p)}\geq\frac{I}{\lambda+(1-\lambda)p}.\label{eq:feasible-FNO}
\end{equation}
Here $\lambda+(1-\lambda)p$ is that chance of having return $R$.
When the project is good (which has probability $\lambda$), return
$R$ has probability 1 and if the project is bad (with probability
$1-\lambda)$, the probability of success is $p$. The profit made
by the GP is 

\begin{align}
 & \Pi_{GP}^{FNO}=(\lambda+(1-\lambda)p)\frac{c}{\lambda(1-p)}-c\nonumber \\
 & =\frac{pc}{\lambda(1-p)}.\label{eq:profit-GP-FNO}
\end{align}

Comparing with equation (\ref{eq:Profit-GP}), we can see that both
reputable and non-reputable agents make the same profit. This comes
from the fact that $s(I)=ps(R)$ for a reputable agent. Therefore
a reputable GP gets the same payout as a non-reputable GP in the case
of getting of a bad project and investing in the safe asset instead
of making a bad investment, which is in the interest of the LP. Not
surprisingly, the feasibility condition (\ref{eq:feasible-FNO}) is
weaker compared to the case of a reputable agent which is (\ref{eq:feasible-No-FNO})
since here the investment strategy by the agent is not optimal. In
the next part I find the optimal whole-portfolio contract and compare
the results with that of section \ref{sec:Optimal-Portfolio-Contracting}.
If $p=0,$ in a single project's contract, LP gets the whole surplus
of the project so for the next part we assume $\max(p_{1},p_{2})>0$.

\subsection{Whole-Portfolio Contracting With FNO\label{subsec:Whole-Portfolio-Contracting-With}}

In this part I find the optimal whole-portfolio contracting in the
presence of the FNO assumption. By the FNO assumption, the contract
satisfies
\[
s^{FNO}(I)=s^{FNO}(2I)=0.
\]
Since $s^{FNO}(2I)=0,$ it is impossible to motivate the GP to not
invest in any bad project when both projects are bad (as the return
to the GP will be zero in this case). So the best strategy that the
LP can hope to achieve, similar to Axelson et al (2009), is the following
\begin{itemize}
\item The GP exerts effort on both projects.
\item If at least one of the projects is good, the GP invests in only good
projects (optimal choice).
\item If both projects are bad, the GP invests in just one bad project. 
\end{itemize}
As we can see, again $R$ is not the outcome of the optimal strategy,
so in the optimal contract $s^{FNO}(R)=0$ holds as well. To induce
(constrained) optimal choice of investment after efforts are exerted,
assuming $p_{1}\geq p_{2},$ the LP should impose 
\[
s^{FNO}(2R)\geq s^{FNO}(R+I)\geq p_{1}s^{FNO}(2R)
\]
 Under these conditions, the GP invests only in good projects if any
are available and invests in only one bad project if both projects
are bad. With this strategy, the total payout to the GP becomes 
\begin{align*}
 & \rho\lambda_{min}s^{FNO}(2R)+(\lambda_{1}-2\rho\lambda_{min}+\lambda_{2})s^{FNO}(R+I)+p_{1}(1-\lambda_{1}-\lambda_{2}+\rho\lambda_{min})s^{FNO}(R+I)\\
 & =\alpha x+\tilde{\beta}y
\end{align*}
 where $(x,y)=(s^{FNO}(2R),s^{FNO}(R+I))$ and $(\alpha,\tilde{\beta})=(\rho\lambda_{min},\lambda_{1}-2\rho\lambda_{min}+\lambda_{2}+p_{1}(1-\lambda_{1}-\lambda_{2}+\rho\lambda_{min}))$.
The only term which is different compared to the reputable agent is
the last term. This comes from the fact that in the case of two bad
projects, the GP invests in the project corresponded to $p_{1}$ and
withhold money on the other one. Similar to what we had in Section
\ref{sec:Optimal-Portfolio-Contracting}, in order to induce effort
on both projects, the contract should satisfy 
\[
\alpha x+\tilde{\beta}y\geq p_{1}y+2c,\theta_{i}y+c
\]
where $\theta_{i}=\lambda_{i}+p_{1}(1-\lambda_{i})$. Therefore, the
LP problem can be written as 
\begin{align*}
 & \min_{x,y}\alpha x+\tilde{\beta}y\\
 & x\geq y\geq p_{1}x;\;\;\;\alpha x+\tilde{\beta}y\geq\max\{\theta_{max}y+c,p_{1}y+2c\}
\end{align*}
 Like before, in the optimum the moral hazard constraint binds, hence
\[
\alpha x+\tilde{\beta}y=\max\{\theta_{max}y+c,p_{1}y+2c\}
\]
Similar to the reputable agent, adverse selection binds as well, and
we have the following proposition. \begin{proposition}[Optimal whole-portfolio contract for non-reputable GP]\label{prop:whole-portfolio-non-reputable}

In the optimal whole-portfolio contract we have 
\[
y=p_{1}x
\]
 If $\rho\geq\rho^{*}=\frac{\lambda_{max}-\lambda_{min}}{\lambda_{min}(\frac{1}{p_{1}}-1)}$
then 
\[
y=\frac{2p_{1}c}{\alpha-p_{1}(p_{1}-\tilde{\beta})}
\]
 otherwise 
\[
y=\frac{p_{1}c}{\alpha-p_{1}(\theta_{max}-\tilde{\beta})}
\]
 Moreover $y=s^{FNO}(R+I)$ coincides with the payout to the GP in
the reputable case $s(R+I)=\frac{z}{p_{1}}$ from Proposition \ref{prop: Optimal-whole-portfolio}.

\end{proposition}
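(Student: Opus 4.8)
The plan is to treat the LP's program as a two-variable linear program in $(x,y)=(s^{FNO}(2R),s^{FNO}(R+I))$ and to reuse the logic of the reputable case. First I would record that the moral-hazard (effort) constraint binds at the optimum: the objective $\alpha x+\tilde\beta y$ is bounded below only by the two incentive inequalities, since the feasibility cone $x\ge y\ge p_1 x$ alone admits $x=y=0$ with objective $0$ (which violates effort). Hence at least one of $\theta_{max}y+c$ and $p_1y+2c$ is active. Writing $\mathrm{RHS}(y)=\max\{\theta_{max}y+c,\,p_1y+2c\}$, on the effort-binding locus the objective equals $\mathrm{RHS}(y)$, which is strictly increasing in $y$; so the optimum is the smallest feasible $y$.

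Next I would show the adverse-selection constraint binds, i.e.\ $y=p_1x$. A two-variable LP optimum is a vertex, and since effort binds the only candidates are its intersection with $x\ge y$ or with $y\ge p_1x$. Substituting $x=y$ respectively $x=y/p_1$ into the binding effort equation and solving, the second vertex yields the smaller $y$ because the coefficient of $x$ satisfies $\alpha/p_1>\alpha$ (as $0<p_1<1$); both vertices are feasible ($p_1<1$ gives $y\ge p_1y$ and $y/p_1\ge y$), so the minimizer is $y=p_1x$. Imposing $y=p_1x$ together with $\alpha x+\tilde\beta y=\mathrm{RHS}(y)$ gives the scalar equation $(\alpha/p_1+\tilde\beta)\,y=\mathrm{RHS}(y)$, whose two solutions are exactly the stated formulas according to whether $\mathrm{RHS}(y)=p_1y+2c$ or $\mathrm{RHS}(y)=\theta_{max}y+c$.

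It remains to match the two regimes with $\rho\gtrless\rho^*$, which I expect to be the main computational obstacle. Since $\theta_{max}-p_1=\lambda_{max}(1-p_1)>0$, the ``no-effort'' term $p_1y+2c$ is the active maximand precisely when $y\le c/(\lambda_{max}(1-p_1))$. Plugging the $\rho\ge\rho^*$ candidate solution into this inequality and using $\tilde\beta=\beta+p_1\gamma$ together with $\lambda_1+\lambda_2=\lambda_{max}+\lambda_{min}$, I would reduce it (after dividing through by $1-p_1$) to $\lambda_{max}-\lambda_{min}\le\rho\lambda_{min}(\tfrac{1}{p_1}-1)$, which is exactly $\rho\ge\rho^*$; the complementary computation yields the other branch. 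Finally, for the coincidence with the reputable contract I would compare $y=\tfrac{2p_1c}{\alpha-p_1(p_1-\tilde\beta)}$ against $z/p_1$ from Proposition \ref{prop: Optimal-whole-portfolio}; clearing denominators collapses the claim to the identity $\alpha+\beta+\gamma=1$, which holds because $\alpha,\beta,\gamma$ are the probabilities of the three equilibrium outcomes $2R,\,R+I,\,2I$, and the same identity disposes of the $\rho<\rho^*$ branch.
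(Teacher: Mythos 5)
Your proposal is correct and follows essentially the same route as the paper's proof: reduce to a small linear program in $(x,y)$, argue the effort constraint binds so the objective equals $\max\{\theta_{max}y+c,\,p_1y+2c\}$ and is increasing in $y$, conclude $y=p_1x$ at the minimizer, match the two branches to $\rho\gtrless\rho^*$ via the computation $\alpha+p_1\tilde\beta-p_1^2=\alpha(1-p_1^2)+p_1\beta(1-p_1)$, and verify the coincidence with the reputable contract by the substitution $\tilde\beta=\beta+p_1\gamma$ with $\gamma=1-\alpha-\beta$. Your vertex-enumeration phrasing and the observation that the coincidence claim collapses to $\alpha+\beta+\gamma=1$ are cleaner packagings of the paper's bound-by-bound derivation and brute-force simplification, but the substance is identical.
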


With the same reasoning as in Proposition \ref{prop: Optimal-whole-portfolio},
two cases are associated with the fact that the expected payout to
the GP ($\alpha x+\tilde{\beta}y)$ becomes $\theta_{max}y+c$ or
$p_{1}y+2c.$ In the non-reputable case, the total payout of both
projects is 
\[
2\alpha R+\tilde{\beta}(R+I)+(1-\alpha-\tilde{\beta})I.
\]
 The derivative with respect to $\rho$ of the total payout is
\[
\lambda_{min}(p_{1}R-I)<0.
\]
As we can see, on one hand the total payout of projects is decreasing
with respect to the correlation between projects. Intuitively this
is because the only scenario in which the investment decision is not
optimal is when both projects are bad and the chance of this scenario
is higher for higher correlation. On the other hand, the total payout
to the GP is decreasing with respect to correlation as again it gets
easier to motivate the GP to exert effort when the correlation goes
up. As a result, the total payout to the investor is ambiguous with
respect to correlation and depends on the relative magnitude of $p_{1}R-I$
and $c.$ Here $p_{1}R-I$ measures the inefficiency associated with
investing in the bad project with success chance $p_{1}$ (hence expected
return $p_{1}R)$ instead of investing in the safe asset (with return
$I$). When $c$ is large enough, since the expected payout to the
GP is proportional to $c,$the total payout to the GP decreases faster
compared to the loss of inefficiency which is proportional to $p_{1}R-I$.
As a result, by increasing $\rho$ the total payout to the LP increases
as well. If $c$ is small enough, the reverse phenomenon happens and
hence the total payout to the LP is decreasing with respect to $\rho$.
Finally, in the middle range of $c,$ total payout to the GP has an
interior optimal correlation. Hence we do not have a monotonic relationship
between the LP's payout and the correlation in the non-reputable case.
However, we can show the following.

\begin{proposition}[Reputable VS Non-reputable agent]\label{prop:reputable-VS-NR}

Suppose for parameters $(\lambda_{1},\lambda_{2},p_{1},p_{2},\rho),$the
investor prefers whole-portfolio contracting when writing contracts
with a generic (reputable) GP. This will also be the case when writing
a contract with a non-reputable GP. In particular, if $\rho\geq\rho^{*}=\frac{\lambda_{max}-\lambda_{min}}{\lambda_{min}(\frac{1}{p_{1}}-1)},$
then whole-portfolio contracting is dominant in the non-reputable
case as well.

\end{proposition}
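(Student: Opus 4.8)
The plan is to express the LP's payoff in each of the four scenarios (reputable versus non-reputable, crossed with whole-portfolio versus deal-by-deal) as total surplus minus the GP's profit, and then to compare the two whole-portfolio-versus-deal-by-deal gaps. Write $S$ for total surplus (expected project cash flows net of capital and effort cost) and $\Pi$ for GP profit, with superscripts $r,n$ (reputable/non-reputable) and subscripts $WP,DBD$; let $V=S-\Pi$ denote the LP payoff in each case. The statement then reduces to showing
\[
\big(V^{n}_{WP}-V^{n}_{DBD}\big)\;\ge\;\big(V^{r}_{WP}-V^{r}_{DBD}\big),
\]
since the right side being nonnegative (the LP prefers whole-portfolio with a reputable GP) forces the left side to be nonnegative.

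First I would record three invariances. (i) The deal-by-deal GP profit is the same for both reputation types: comparing (\ref{eq:Profit-GP}) with (\ref{eq:profit-GP-FNO}) gives $\Pi^{r}_{DBD}=\Pi^{n}_{DBD}=\sum_i \frac{p_i c}{\lambda_i(1-p_i)}$. (ii) Under a reputable GP both contracts implement strategy (4) (invest only in good projects), so $S^{r}_{WP}=S^{r}_{DBD}=(\lambda_1+\lambda_2)(R-I)-2c$, whence $V^{r}_{WP}-V^{r}_{DBD}=\Pi^{r}_{DBD}-\Pi^{r}_{WP}$. (iii) The whole-portfolio GP profit is also reputation-invariant: by Proposition \ref{prop:whole-portfolio-non-reputable} the value $y=s^{FNO}(R+I)$ equals the reputable $s(R+I)=z/p_1$, and since adverse selection binds as $y=p_1 x$ in both cases, $x$ coincides as well; using $\tilde\beta=\beta+p_1\gamma$ and $z=p_1 y$ the gross payouts satisfy $\alpha x+\tilde\beta y=\alpha x+\beta y+\gamma z$, so $\Pi^{n}_{WP}=\Pi^{r}_{WP}$.

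Substituting these invariances, the difference of the two gaps collapses to
\[
\big(V^{n}_{WP}-V^{n}_{DBD}\big)-\big(V^{r}_{WP}-V^{r}_{DBD}\big)=S^{n}_{WP}-S^{n}_{DBD},
\]
so it remains to show $S^{n}_{WP}\ge S^{n}_{DBD}$. I would compute both surpluses directly: under deal-by-deal with the FNO each project is funded regardless of quality (strategy (3)), while under whole-portfolio with the FNO a bad investment occurs only in the both-bad state and then only in the $p_1$-project. The difference works out to
\[
S^{n}_{WP}-S^{n}_{DBD}=(\rho\lambda_{min}-\lambda_2)(p_1R-I)-(1-\lambda_2)(p_2R-I).
\]
Both $p_1R-I$ and $p_2R-I$ are negative by Assumption \ref{assumption-1}, and $\rho\lambda_{min}\le\lambda_{min}\le\lambda_2$ gives $\rho\lambda_{min}-\lambda_2\le 0$; hence each term is nonnegative and $S^{n}_{WP}-S^{n}_{DBD}\ge 0$, proving the first claim. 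The ``in particular'' clause then follows because, by Proposition \ref{prop: WP-VS-DBD-Reputable}, $\rho\ge\rho^{*}$ already implies the reputable LP prefers whole-portfolio (i.e. $\rho^{**}\le\rho^{*}$), a preference the argument transfers to the non-reputable case.

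The step I expect to require the most care is the whole-portfolio invariance (iii): it says that excluding fly-by-night operators costs the LP nothing in terms of the GP's rent, so the entire wedge between the two reputation regimes is carried by the surplus term $S^{n}_{WP}-S^{n}_{DBD}$. That surplus gap is precisely the value saved by having the non-reputable GP sink capital into one bad project in the both-bad state rather than into each bad project separately, and signing it cleanly is exactly what Assumption \ref{assumption-1} and the bound $\rho\lambda_{min}\le\lambda_2$ deliver.
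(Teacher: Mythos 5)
Your proof is correct and follows essentially the same route as the paper's: the same three invariances (equal deal-by-deal rents across reputation types, equal whole-portfolio rents via $s^{FNO}(R+I)=s(R+I)$ from Proposition \ref{prop:whole-portfolio-non-reputable}, and equal surpluses in the reputable case), reducing everything to the surplus comparison $S^{n}_{WP}\geq S^{n}_{DBD}$, which the paper asserts by appeal to Subsection \ref{subsec:Whole-Portfolio-Contracting-With} and you verify with an explicit formula signed by Assumption \ref{assumption-1}. The only presentational differences are that you establish the whole-portfolio rent invariance through the region-free identity $\tilde\beta=\beta+p_{1}\gamma$ together with $z=p_{1}y$ rather than the paper's case analysis over the two $\rho$ regimes, and that your closed-form expression for $S^{n}_{WP}-S^{n}_{DBD}$ makes explicit the cost of the extra bad investments under deal-by-deal that the paper leaves implicit.
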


This proposition comes from the fact that when dealing with non-reputable
agents, whole-portfolio contracting can help to improve the investment
strategy. Hence the space of parameters in which whole-portfolio contracting
is better for the LP is larger compared to the reputable case. As
mentioned in the introduction, this result has been observed empirically
in Huther et al (2020).

\section{Extension \label{sec:Extension}}

In this section, I consider various modifications of the model and
how it can affect the results. I make some predictions/observations
as well.

\subsection{No Asymmetric Information\label{sec:No-Asymmetric-Information}}

In the special case when $p_{1}=p_{2}=0,$ as we mentioned in section
\ref{sec:Model}, there is no profitable bad option for the GP to
invest in. Equivalently, there is no information asymmetry about the
quality of projects between the GP and the LP as there is no possible
profitable deviation. Under binary effort assumption, by equation
(\ref{eq:Profit-GP}), the whole surplus of every project goes to
the LP and hence the method of contracting is irrelevant in this setting.
In this case when adverse selection is absent, since effort is binary,
contract makes the GP indifferent between exerting effort and not
exerting effort and hence the LP can get the first-best outcome. In
order to analyze this important special case in depth, here I allow
for different levels of effort to see how it affects the contract.
Therefore for the purpose of this section, suppose with the variable
cost $c_{i}(\lambda_{i}),$ the chance of getting a good $i-$project
is $\lambda_{i}$. I assume $c_{i}(0)=c_{i}'(0)=c_{i}''(0)=0$ and
$c_{i}^{(3)}>0.$ I first consider the general case and then restrict
to the especial case $p_{1}=p_{2}=0$ which we are interested in.
As before, the chance of success for a $i-$project of type $B$ is
$p_{i}$ . The first-best effort satisfies 
\[
\max_{\lambda_{i}}\lambda_{i}R+(1-\lambda_{i})I-c_{i}(\lambda_{i})-I
\]
By FOC the optimal effort satisfies $c'_{i}(\lambda_{i}^{FB})=R-I$.
Not surprisingly it is independent of $p_{i}$ as there is no adverse
selection problem. I Assume $c'_{i}(1)>R-I$ to make sure that $\lambda_{i}^{FB}<1.$
Now suppose the contract $(s_{GP}(I),s_{GP}(R))$ is offered to the
GP. Similar to the binary case, the contract should satisfy 
\begin{equation}
s_{GP}(I)\geq p_{i}s_{GP}(R)\label{eq:general-adverse}
\end{equation}
to make sure that GP does not invest in the bad project. Once offered,
the agent chooses effort $\lambda_{i}$ which is the solution to the
problem 
\[
\max_{\lambda_{i}}\lambda_{i}s_{GP}(R)+(1-\lambda_{i})s_{GP}(I)-c_{i}(\lambda_{i})
\]
FOC implies 
\begin{equation}
s_{GP}(R)-s_{GP}(I)=c'_{i}(\lambda_{i})\label{eq:general-fb}
\end{equation}
in particular by decreasing $s_{GP}(I),$ the effort increases which
is in the favor of LP (both lower payment and higher effort). Hence
in the optimal $s_{GP}(I)=p_{i}s_{GP}(R)$ as contract should satisfies
equation (\ref{eq:general-adverse}). This gives 
\begin{equation}
c_{i}'(\lambda_{i}^{*})=(1-p_{i})s_{GP}(R)\label{eq:s(R)}
\end{equation}
in the optimum. Once we have this, $\lambda_{i}$ is determined by
solving
\[
\max_{\lambda_{i}}\lambda_{i}(R-I)-\frac{\lambda_{i}c'_{i}(\lambda_{i})+(1-\lambda_{i})p_{i}c'_{i}(\lambda_{i})}{1-p_{i}}
\]
which comes from the LP problem 
\[
\max_{\lambda_{i}}\lambda_{i}[R-s_{GP}(R)]+(1-\lambda_{i})[I-p_{i}s_{GP}(R)]
\]
and equation (\ref{eq:s(R)})$.$ Specializing to the case $p_{i}=0,$one
gets $c'_{i}(\lambda_{i}^{*})=s_{GP}(R).$ The equation to determine
$\lambda_{i}$ becomes 
\[
\max_{\lambda_{i}}\lambda_{i}(R-I)-\lambda_{i}c'_{i}(\lambda_{i})
\]
and FOC gives 
\[
R-I=c'_{i}(\lambda_{i}^{*})+\lambda_{i}c"_{i}(\lambda_{i}^{*}).
\]
When comparing the second-best effort with first best (equation (\ref{eq:general-fb})),
the extra term $\lambda_{i}c''_{i}(\lambda_{i})$ measures the moral
hazard issue and it reduces the effort by the agent. So in this more
flexible setting, first-best is not contactable even without adverse
selection. The expected payout from one project to the LP is 
\[
\lambda_{i}(R-c'_{i}(\lambda_{i}))+(1-\lambda_{i})I-I,
\]
which can be written as 
\begin{align*}
 & \lambda_{i}(R-I-c'_{i}(\lambda_{i}))\\
 & =\lambda_{i}^{2}c''_{i}(\lambda_{i}).
\end{align*}
The expected payout to the GP is $\lambda_{i}c'_{i}(\lambda_{i})$
and $\lambda_{i}c'_{i}(\lambda_{i})-c_{i}(\lambda_{i})>0$ is the
expected profit for the the GP. 

For the whole-portfolio contracting, I consider a simple whole-portfolio
contract which pays agent only when the outcome is $2R.$ In this
case, the GP problem is
\[
\max_{\lambda_{1},\lambda_{2}}\lambda_{1}\lambda_{2}s_{GP}(2R)-c_{1}(\lambda_{1})-c_{2}(\lambda_{2})
\]
 which gives 
\begin{align*}
 & c_{2}'(\lambda_{2})=\lambda_{1}s_{GP}(2R)\\
 & c_{1}'(\lambda_{1})=\lambda_{2}s_{GP}(2R)
\end{align*}
Once this, the LP problem is to determine $s_{GP}(2R)$ to maximize
the expected revenue which is 
\[
\max_{s_{GP}(2R)}\lambda_{1}\lambda_{2}(2R-s_{GP}(2R))+[\lambda_{1}(1-\lambda_{2})+\lambda_{2}(1-\lambda_{1})](R+I)+(1-\lambda_{1})(1-\lambda_{2})2I
\]
 where $\lambda_{1},\lambda_{2}$ satisfy GP's optimality equations.
We have the following proposition which gives answer for a wide class
of cost functions. 

\begin{proposition}[No asymmetric information case]\label{prop: general-cost}

If $c_{1}=a\lambda^{m}$ and $c_{2}=b\lambda^{m}$ for $m>2$ and
$a,b>0,$ then whole-portfolio contracting is better for the LP compared
to deal-by-deal contract.

\end{proposition}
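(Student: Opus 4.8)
The plan is to compute both the deal-by-deal and the (simple) whole-portfolio profits of the LP in closed form for the power cost $c_i(\lambda)=a_i\lambda^m$ (with $a_1=a$, $a_2=b$), and then reduce the whole comparison to a single scalar inequality. Since the whole-portfolio contract considered here is a particular, possibly suboptimal one that pays only on the outcome $2R$, showing that it already dominates deal-by-deal establishes the proposition a fortiori. For the deal-by-deal side I would specialize the effort condition $R-I=c_i'(\lambda_i^*)+\lambda_i c_i''(\lambda_i^*)$ to the power cost, which gives $R-I=m^2 a_i(\lambda_i^{DBD})^{m-1}$, hence $\lambda_i^{DBD}=\big(\tfrac{R-I}{m^2 a_i}\big)^{1/(m-1)}$. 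Inserting this into the per-project LP profit $\lambda_i^2 c_i''(\lambda_i)=m(m-1)a_i\lambda_i^m$ and using the effort condition collapses it to $\tfrac{(m-1)(R-I)}{m}\lambda_i^{DBD}$, so $\Pi^{DBD}_{LP}=\tfrac{(m-1)(R-I)}{m}\big(\lambda_1^{DBD}+\lambda_2^{DBD}\big)$.

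Next I would solve the whole-portfolio side. The GP's first-order conditions $c_1'(\lambda_1)=\lambda_2 s$ and $c_2'(\lambda_2)=\lambda_1 s$ (with $s=s_{GP}(2R)$) give $a\lambda_1^m=b\lambda_2^m$, i.e.\ the effort pair always lies on the ray $\lambda_2=k\lambda_1$, $k=(a/b)^{1/m}$, independent of $s$. The expected transfer to the GP is $\lambda_1\lambda_2 s=\lambda_1 c_1'(\lambda_1)=ma\lambda_1^m$, and a direct expansion of the LP's objective shows the total expected return is the linear quantity $2I+(R-I)(\lambda_1+\lambda_2)$ — all the $\lambda_1\lambda_2$ terms cancel. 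Thus, parametrizing by $\lambda_1$ along the ray, the LP solves the concave program $\max_{\lambda_1\ge 0}(R-I)(1+k)\lambda_1-ma\lambda_1^m$, whose optimum is $\lambda_1^{WP}=(1+k)^{1/(m-1)}\lambda_1^{DBD}$ and which, after the same collapse, yields $\Pi^{WP}_{LP}=\tfrac{(m-1)(R-I)}{m}(1+k)^{m/(m-1)}\lambda_1^{DBD}$.

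Dividing the two profits reduces the claim to $(1+k)^{m/(m-1)}>1+k^{m/(m-1)}$, since $\lambda_2^{DBD}/\lambda_1^{DBD}=(a/b)^{1/(m-1)}=k^{m/(m-1)}$. Writing $q=m/(m-1)>1$ and $t=k>0$, this is exactly the superadditivity inequality $(1+t)^q>1+t^q$, which I would prove by noting that $f(t)=(1+t)^q-1-t^q$ satisfies $f(0)=0$ and $f'(t)=q\big[(1+t)^{q-1}-t^{q-1}\big]>0$ for $t>0$, because $x\mapsto x^{q-1}$ is increasing when $q>1$. This is the conceptual heart of the result: the convexity of the power cost makes the bundled reward strictly more effective at buying effort.

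I expect the main obstacle to be verification rather than ideas — specifically, confirming that the GP's first-order conditions genuinely characterize a maximum, since the GP's objective is not jointly concave (the cross term $s$ is positive). Here $m>2$ is doing real work. Computing the Hessian at the candidate and using the ray relation $a\lambda_1^m=b\lambda_2^m$, the negative-definiteness condition reduces precisely to $(m-1)^2>1$, i.e.\ $m>2$; the same hypothesis is what makes $c_i$ satisfy the maintained regularity $c_i''(0)=0$ and $c_i^{(3)}>0$ (and, incidentally, guarantees the GP's participation, since the equilibrium rent equals $(m-2)a\lambda_1^m>0$). I would record this second-order check as the one nontrivial verification before invoking the scalar inequality, which then closes the argument.
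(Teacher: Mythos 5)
Your proposal is correct and follows essentially the same route as the paper: both sides compute the deal-by-deal and whole-portfolio profits in closed form from the first-order conditions, observe the cancellation that makes the gross return linear in $\lambda_1+\lambda_2$, and reduce the comparison to the same scalar inequality $(1+C)^{m/(m-1)}>1+C^{m/(m-1)}$ with $C=(a/b)^{1/m}$, which you verify by the superadditivity derivative argument while the paper substitutes $C=d^{m-1}$ and assumes $a\le b$ without loss of generality. Your additional second-order verification — that negative definiteness of the GP's Hessian at the interior critical point and positivity of the GP's rent $(m-2)a\lambda_1^m$ both reduce exactly to $m>2$ — is a check the paper omits entirely, and it is a genuine (if small) improvement since it explains why the hypothesis $m>2$ is needed.
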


The intuition behind the proposition is simple. When there is no asymmetry
of information, it is better that contract motivates effort as easily
as possible. Tying outcomes together can provide a bigger incentive
relative to contracting on projects in deal-by-deal basis. This proposition
shed light on the fact that in settings where the investor and agent
have same information about the quality of projects, the whole-portfolio
contracting is dominant. This includes hedge-funds, mutual-bonds or
other contracts on public equities. 

\subsection{Uninformed Investor\label{sec:Uninformed-Investor}}

In this part, I consider the case in which investor is not informed
about the correlation. Other than this, I assume the same setup as
in the main model. Since in the deal-by-deal contract correlation
has no effect on the outcome, I only focus on the whole-portfolio
contracting. Assume that the GP can privately and strategically choose
the correlation $\rho$ in the interval $[\rho_{1},\rho_{2}]$. While
the interval is common knowledge, the LP does not observe $\rho$
directly. The case of informed investor is a special case when $\rho_{1}=\rho_{2}=\rho$.
Since investment compatibility conditions are independent from $\rho,$
as in the informed case, optimal contract satisfies 
\begin{align*}
 & z\geq p_{1}y,p_{2}y,p_{1}p_{2}x\\
 & y\geq z,p_{1}x,p_{2}x\\
 & x\geq z,y
\end{align*}
where variables are as in equation (\ref{eq:payout-abr}). As before,
assume $p_{1}\geq p_{2}.$ The expected payout to the GP from choosing
$\rho$ is 
\[
\rho\lambda_{min}x+[\lambda_{1}-2\rho\lambda_{min}+\lambda_{2}]y+[1-\lambda_{1}-\lambda_{2}+\rho\lambda_{min}]z.
\]
Derivative with respect to $\rho$ of the expression above is 
\[
\lambda_{min}(x-2y+z).
\]
There are two possible scenarios for the GP to choose the correlation.
If $x>2y-z,$ the payout for two successful exits are relatively high
hence GP wants to maximize the chance of having two successful exits.
Therefore GP chooses the highest possible correlation i.e $\rho=\rho_{2}$.
In contrary, if $y$ is relatively high ($2y>x-z$), then it is more
profitable for GP to have only one successful exit. This event has
the highest chance when $\rho$ is smallest which is $\rho=\rho_{1}.$
Now suppose $(x,y,z)$ has the form of $(\frac{z}{p_{1}^{2}},\frac{z}{p_{1}},z)$
which is the same as in the optimal contract with informed investor
from Proposition \ref{prop: Optimal-whole-portfolio}. In this case,
the derivative with respect to $\rho$ of the payout to GP becomes
\[
\lambda_{min}z(\frac{1}{p_{1}^{2}}-\frac{2}{p_{1}}+1)=\lambda_{min}z(\frac{1}{p_{1}}-1)^{2}>0.
\]
Therefore as argued above, GP chooses the highest value of $\rho$
which is $\rho_{2}.$ As we saw in the informed problem, when $\rho$
goes up, $E[s_{LP}]$ goes up as well in the optimal informed contract.
Therefore LP optimally offers the contract $(\frac{z^{*}}{p_{1}^{2}},\frac{z^{*}}{p_{1}},z^{*})$
where $z^{*}=z^{*}(\rho_{2})$ is the management fee in the optimal
contract for the correlation $\rho_{2}$ from Proposition \ref{prop: Optimal-whole-portfolio}.
GP optimally chooses $\rho_{2}$ as well from discussion above. Therefore
if for $\rho=\rho_{2}$, LP prefers whole-portfolio contracting to
deal-by-deal, then the contract above is offered. Otherwise deal-by-deal
contract is offered. In summary we have 

\begin{proposition}[Uninformed Investor]\label{prop-uninformed LP}

Suppose GP can privately chooses $\rho$ in the interval $[\rho_{1},\rho_{2}].$Then 
\begin{itemize}
\item If $\rho_{2}<\rho^{**},$ a deal-by-deal contract is offered to GP.
\item If $\rho_{2}\geq\rho^{**},$a whole-portfolio contract associated
to $\rho=\rho_{2}$ is offered to GP and GP chooses $\rho=\rho_{2}$
optimally.
\end{itemize}
\end{proposition}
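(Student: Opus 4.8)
The plan is to reduce the uninformed-investor problem to the informed problem of Propositions \ref{prop: Optimal-whole-portfolio} and \ref{prop: WP-VS-DBD-Reputable} by exploiting two facts: the GP's choice of $\rho$ is a corner solution, and the LP's payoff is monotone in $\rho$. First I would record the GP's best response to any feasible contract $(x,y,z)$. As computed above, the GP's expected payoff is affine in $\rho$ with slope $\lambda_{min}(x-2y+z)$, so the GP always selects an endpoint: $\rho=\rho_{2}$ when $x+z\geq 2y$, and $\rho=\rho_{1}$ otherwise. Thus the LP effectively chooses which endpoint to implement, subject to the corresponding slope-sign constraint together with the $\rho$-independent incentive and adverse-selection constraints $x\geq y\geq z\geq p_{1}y\geq p_{1}^{2}x$.

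Next I would argue that the LP optimally implements $\rho=\rho_{2}$ and that, conditional on doing so, his problem is exactly the informed problem at $\rho=\rho_{2}$. For the upper corner, note that the informed-optimal family $(x,y,z)=(z/p_{1}^{2},z/p_{1},z)$ from Proposition \ref{prop: Optimal-whole-portfolio} has slope $\lambda_{min}z(1/p_{1}-1)^{2}>0$, so the slope-sign constraint $x+z\geq 2y$ needed to induce $\rho_{2}$ is slack at this family; hence the LP's best contract inducing $\rho_{2}$ is precisely the informed-optimal contract evaluated at $\rho=\rho_{2}$, delivering the informed payoff $\Pi_{LP}(\rho_{2})$. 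To rule out the lower corner, I would use a relaxation argument: any feasible contract that induces $\rho=\rho_{1}$ is in particular feasible for the (less constrained) informed problem at $\rho_{1}$, so it yields at most $\Pi_{LP}(\rho_{1})$; by the monotonicity of the LP's payoff in $\rho$ established in Proposition \ref{prop: WP-VS-DBD-Reputable}, $\Pi_{LP}(\rho_{1})\leq \Pi_{LP}(\rho_{2})$. Therefore implementing $\rho_{2}$ is weakly optimal, the optimal uninformed whole-portfolio payoff equals $\Pi_{LP}(\rho_{2})$, and under the offered contract the GP indeed chooses $\rho_{2}$.

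Finally I would compare the best whole-portfolio option with deal-by-deal. Since the deal-by-deal payoff is independent of $\rho$ and, by Proposition \ref{prop: WP-VS-DBD-Reputable}, the whole-portfolio payoff $\Pi_{LP}(\rho)$ is increasing in $\rho$ and crosses the deal-by-deal payoff at $\rho=\rho^{**}$, the whole-portfolio option at $\rho_{2}$ dominates deal-by-deal exactly when $\rho_{2}\geq\rho^{**}$. This yields the two cases in the statement: for $\rho_{2}\geq\rho^{**}$ the LP offers the whole-portfolio contract associated with $\rho_{2}$ (and the GP optimally chooses $\rho_{2}$), while for $\rho_{2}<\rho^{**}$ the LP offers the deal-by-deal contract. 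I expect the main obstacle to be the middle step, namely making the corner-plus-relaxation argument airtight: verifying both that the slope-sign constraint for $\rho_{2}$ is genuinely non-binding at the informed optimum and that no contract inducing $\rho_{1}$ can beat $\Pi_{LP}(\rho_{2})$, since this is where the endogeneity of the GP's private correlation choice could in principle break the clean reduction to the informed benchmark.
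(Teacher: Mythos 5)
Your proof is correct and takes essentially the same route as the paper's: linearity of the GP's payoff in $\rho$ yielding a corner solution, the positive slope $\lambda_{min}z\left(\frac{1}{p_{1}}-1\right)^{2}$ at the informed-optimal contract $(z/p_{1}^{2},z/p_{1},z)$ so the GP picks $\rho_{2}$, monotonicity of the LP's payoff in $\rho$ from Proposition \ref{prop: WP-VS-DBD-Reputable}, and the threshold comparison with deal-by-deal at $\rho^{**}$. Your explicit relaxation argument ruling out the $\rho_{1}$ corner merely makes rigorous a step the paper leaves implicit.
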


\subsection{Conditional Contract \label{sec:Conditional-Contract}}

In this part, I consider conditional contracting which means that
the payout of the contract is a function of the outcome of each project.
This definition contains both deal-by-deal contracting as well as
whole-portfolio contracting as special cases. When the payout of projects
are $2I$ or $2R,$ it corresponds uniquely to two bad or good projects
respectively. However, there are two possible ways to get the outcome
$R+I$. When the first project is type $G$ or when the second one
is and the other one is type $B$. Unlike the whole-portfolio and
similar to deal-by-deal, when the contract is fully conditional, total
payout to GP can be different in these two cases. Therefore take $y_{1}$
and $y_{2}$ as possible payouts to GP where $y_{1}$ is $s(R+I)$
when the first project is successful and $y_{2}$ is that of when
the second one is type $G$. Set $x=s(2R)$ and $z=s(2I)$ as in the
whole-portfolio contracting. Also recall that from equation (\ref{eq:s(I)=00003Ds(R)=00003D0}),
we have $s(I)=s(R)=0.$ Similar to whole-portfolio contracting, in
the optimal contract, payouts satisfy 
\begin{align*}
 & z\geq p_{1}p_{2}x,p_{1}y_{1},p_{2}y_{2}\\
 & y_{1}\geq z,p_{2}x\\
 & y_{2}\geq z,p_{1}x\\
 & x\geq z,y_{1},y_{2}
\end{align*}
When comparing these to analogues inequalities in the whole-portfolio
contracting, we see that there is efficiency gain. recall that in
the whole-portfolio contracting, $z$ should be bigger than both $p_{1}y$
and $p_{2}y$ since there is no difference between payouts to GP when
the return is $R+I$ and the first project is successful or the return
is $R+I$ and the second project is successful. As a result LP should
overcompensate GP to cover both cases and this causes some inefficiency
when compared to conditional contracting. The same phenomena happens
when comparing $y$ and $x$. $y$ should be bigger than both $p_{1}x$
and $p_{2}x$ while in the conditional contract there are two different
values $y_{1}$ and $y_{2}$ instead of single payout $y$. Assume
$y_{min}$ and $y_{max}$ are the corresponding payouts for when the
project with $\lambda_{min}$ or $\lambda_{max}$ succeed respectively.
To motivate GP to exert effort on both projects (moral hazard problem),
the contract should satisfy 
\begin{align*}
 & \rho\lambda_{min}x+(\lambda_{max}-\rho\lambda_{min})y_{max}+(1-\rho)\lambda_{min}y_{min}+(1-\lambda_{1}-\lambda_{2}+\rho\lambda_{min})z\\
 & \geq z+2c,\lambda_{1}y_{1}+(1-\lambda_{1})z+c,\lambda_{2}y_{2}+(1-\lambda_{2})z+c
\end{align*}
So LP problem is 
\begin{align*}
 & \min_{x,y,z}\alpha x+\beta_{1}y_{1}+\beta_{2}y_{2}+\gamma z\\
 & \alpha x+\beta_{1}y_{1}+\beta_{2}y_{2}+\gamma z\geq z+2c,\lambda_{i}y_{i}+(1-\lambda_{i})z+c\\
 & \ensuremath{x\geq y_{i}\geq z\geq p_{i}y_{i}\geq p_{1}p_{2}x}
\end{align*}
where $(\alpha,\beta_{1},\beta_{2},\gamma)=(\rho\lambda_{min},\lambda_{max}-\rho\lambda_{min},(1-\rho)\lambda_{min},1-\lambda_{1}-\lambda_{2}+\rho\lambda_{min})$.
With similar reasoning as in the whole-portfolio contracting, in the
optimum 
\begin{equation}
\alpha x+\beta_{1}y_{1}+\beta_{2}y_{2}+\gamma z=\max\{z+2c,\lambda_{i}y_{i}+(1-\lambda_{i})z+c\}\label{eq:GP-Payout-Conditional}
\end{equation}
Similar to Proposition \ref{prop: Optimal-whole-portfolio} we have
the following. 

\begin{proposition}[Optimal Conditional Contract]\label{prop: Optimal conditional}

In the optimal contract, 
\begin{equation}
z=p_{1}y_{1}=p_{2}y_{2}=p_{1}p_{2}x\label{eq:optimal-conditional}
\end{equation}
$z$ is determined such that equation (\ref{eq:GP-Payout-Conditional})
is satisfied.

\end{proposition}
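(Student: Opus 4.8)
The plan is to mirror the argument used for Proposition \ref{prop: Optimal-whole-portfolio} (the reputable whole-portfolio case) and for Proposition \ref{prop:whole-portfolio-non-reputable} (the non-reputable case), since the conditional contract is structurally the same linear program with more decision variables. The claim has two parts: first, that the adverse-selection constraints bind as the chain of equalities \eqref{eq:optimal-conditional}, and second, that $z$ is pinned down by the binding moral-hazard constraint \eqref{eq:GP-Payout-Conditional}. I would establish each part by a perturbation (variational) argument showing that if any listed constraint were slack, the LP could strictly reduce the objective $\alpha x + \beta_1 y_1 + \beta_2 y_2 + \gamma z$ while preserving feasibility, contradicting optimality.

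First I would show the moral-hazard constraint binds, exactly as in the earlier propositions: if $\alpha x + \beta_1 y_1 + \beta_2 y_2 + \gamma z$ strictly exceeds $\max\{z+2c, \lambda_i y_i + (1-\lambda_i)z + c\}$, then the LP can lower the smallest component of the payout vector by a small $\epsilon$ without violating any constraint, strictly decreasing the objective. This gives \eqref{eq:GP-Payout-Conditional}. Then I would argue the adverse-selection constraints bind in the form \eqref{eq:optimal-conditional}. The key is the structure $x \geq y_i \geq z \geq p_i y_i \geq p_1 p_2 x$: I would show that in the optimum one cannot reduce $z$ (it is held up by $z = p_i y_i$ for the binding index), nor reduce $y_i$ below $z/p_i$, while the success payout $x$ is driven to its maximum $x = y_i/p_i = z/p_i^2$ because raising $x$ relaxes the moral-hazard constraint most cheaply per unit of expected cost (this is precisely the intuition already stated after Proposition \ref{prop: Optimal-whole-portfolio} that the LP wants the maximal $x$ to give the strongest effort incentive). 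Combining, both chains $z = p_1 y_1 = p_1 p_2 x$ and $z = p_2 y_2 = p_1 p_2 x$ hold, yielding \eqref{eq:optimal-conditional}.

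The one genuinely new wrinkle relative to the whole-portfolio case is that there are now two separate payouts $y_1, y_2$ tied to the same intermediate outcome $R+I$, so I must check that \emph{both} adverse-selection inequalities $z \geq p_1 y_1$ and $z \geq p_2 y_2$ bind simultaneously, rather than only the more stringent one as happened when the two were collapsed into a single $y$. I expect this to be the main obstacle. The resolution is that conditionality decouples the two constraints: reducing $y_i$ toward $z/p_i$ independently lowers $\beta_i y_i$ in the objective without affecting the constraint on $y_{3-i}$, so each $y_i$ is independently pushed down to its floor $z/p_i$. This is exactly the efficiency gain over whole-portfolio contracting noted in the paragraph preceding the proposition, and it is what forces $p_1 y_1 = p_2 y_2 = z$ rather than a single binding inequality.

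Finally, once \eqref{eq:optimal-conditional} is established I would substitute $y_i = z/p_i$ and $x = z/(p_1 p_2)$ into the binding moral-hazard equation \eqref{eq:GP-Payout-Conditional} and solve the resulting scalar equation for $z$, splitting into the two regimes according to whether the binding right-hand side is $z + 2c$ (high correlation, as when $\rho \geq \rho^*$) or $\lambda_i y_i + (1-\lambda_i)z + c$ (low correlation), just as in Propositions \ref{prop: Optimal-whole-portfolio} and \ref{prop:whole-portfolio-non-reputable}. This step is routine algebra and I would not grind through it in the plan; the substantive content is the two binding arguments above.
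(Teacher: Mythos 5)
Your high-level skeleton does match the paper's proof: show the moral-hazard constraint binds (equation (\ref{eq:GP-Payout-Conditional})), then show the adverse-selection chain (\ref{eq:optimal-conditional}) binds, then solve for $z$. But your argument for the key step --- that \emph{both} $z=p_{1}y_{1}$ and $z=p_{2}y_{2}$ bind --- is wrong as stated. You have the inequality directions reversed: the constraint $z\geq p_{i}y_{i}$ makes $z/p_{i}$ a \emph{ceiling} for $y_{i}$, not a floor, so ``reducing $y_{i}$ toward $z/p_{i}$'' and ``pushed down to its floor $z/p_{i}$'' are contradictions in terms. If you could freely reduce each $y_{i}$, the optimum would sit at $y_{i}=\max\{z,p_{3-i}x\}$, which makes the adverse-selection constraint \emph{slack} --- the opposite of what you must prove. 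The correct mechanism is that $y_{1},y_{2},x$ are pushed \emph{up} to their caps $z/p_{1},z/p_{2},z/(p_{1}p_{2})$: at the optimum the GP's expected payout equals the right-hand side of (\ref{eq:GP-Payout-Conditional}), which is driven by $z$, so the LP wants the smallest $z$ for which the moral-hazard equality can still be met, and that forces the incentive payments to be as large as the adverse-selection caps allow. Relatedly, your formula $x=y_{i}/p_{i}=z/p_{i}^{2}$ is incorrect; the chain gives $x=z/(p_{1}p_{2})=y_{1}/p_{2}=y_{2}/p_{1}$ (your later substitution uses the right value, but the perturbation you describe targets the wrong point).

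The deeper gap is the ``decoupling'' claim itself: adjusting $y_{i}$ is \emph{not} free of the moral-hazard constraint, since $\beta_{i}y_{i}$ appears on its left-hand side and $\lambda_{i}y_{i}+(1-\lambda_{i})z+c$ is one of the terms in the max on its right-hand side. This coupling is exactly what makes the proposition nontrivial. The paper's proof first shows that at least one of $z=p_{1}y_{1}$, $z=p_{2}y_{2}$ must hold (reducing $z$ alone preserves the moral-hazard inequality because $\gamma\leq\min(1-\lambda_{1},1-\lambda_{2})$, so only an adverse-selection constraint can block the reduction), and then does a case analysis on which term attains the max in (\ref{eq:GP-Payout-Conditional}). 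The delicate case --- assuming $z=p_{1}y_{1}>p_{2}y_{2}$ --- is when the binding moral-hazard term is $\lambda_{2}y_{2}+(1-\lambda_{2})z+c$, i.e.\ the one involving the project whose adverse-selection constraint is not yet known to bind: there the GP's payout is no longer a function of $z$ alone, and the paper must run a joint perturbation (reduce $z$ by $\epsilon$ and $y_{1}$ by $\epsilon/p_{1}$, re-solve for $x'$) and then argue by sub-cases on whether the perturbation is blocked by $y_{2}=p_{1}x$ or by $x=y_{i}$, extracting the equality $z=p_{1}p_{2}x$ from the blocked configuration. Your plan has no counterpart to this case, and the ``independently pushed'' argument cannot generate one, so as written the proposal does not establish (\ref{eq:optimal-conditional}).
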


As in the whole-portfolio contracting, the inequalities which are
dealing with adverse selection issue are binding as well. It is worth
mentioning that when $p_{1}=p_{2},$ we get $y_{1}=y_{2}$ even in
the case that $\lambda_{1}\not=\lambda_{2},$hence the contract reduces
to a whole-portfolio contract. This is because the variation in $\lambda$s
affect the moral hazard problem and has no bearing on the asymmetric
information issue which rises after exerting effort on projects. In
fact when we compare whole-portfolio contracting with the conditional
contracting for the loss of efficiency for LP, we have (recall that$p_{2}<p_{1})$

\begin{proposition}[Efficiency loss for whole-portfolio contracting]\label{prop:effciency-loss}

The profit made by GP in the conditional contracting decreases as
$p_{2}(\leq p_{1})$ increases. When $p_{2}=p_{1},$the profit equals
the profit made in the whole-portfolio contracting.

\end{proposition}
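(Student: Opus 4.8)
The plan is to reduce the GP's profit to an explicit one-dimensional function of $p_2$ using the characterization in Proposition \ref{prop: Optimal conditional}, and then to read off its behavior by a single differentiation. First I would invoke (\ref{eq:optimal-conditional}), which pins the adverse-selection constraints to $z = p_1 y_1 = p_2 y_2 = p_1 p_2 x$, so that $y_1 = z/p_1$, $y_2 = z/p_2$ and $x = z/(p_1 p_2)$. Substituting these into the binding moral-hazard condition (\ref{eq:GP-Payout-Conditional}) collapses the entire contract onto the single scalar $z$. Since the GP's net profit is her gross expected payout $\alpha x + \beta_1 y_1 + \beta_2 y_2 + \gamma z$ minus the effort cost $2c$, and (\ref{eq:GP-Payout-Conditional}) fixes the gross payout to the value of the binding deviation, the profit becomes an explicit function of $z$, and hence of $p_2$.

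Next I would solve (\ref{eq:GP-Payout-Conditional}) for $z$ in each of the two correlation regimes of Proposition \ref{prop: Optimal-whole-portfolio}. Writing $C(p_2) = \frac{\alpha}{p_1 p_2} + \frac{\beta_1}{p_1} + \frac{\beta_2}{p_2} + \gamma$ for the coefficient multiplying $z$ in the gross payout, the high-correlation regime gives $z = 2c/(C-1)$ with net profit equal to $z$, while the low-correlation regime replaces the no-effort deviation $z+2c$ by the single-project deviation $\lambda_{max} y_{max} + (1-\lambda_{max})z + c$ and yields an analogous closed form. In either case $p_2$ enters only through the terms $\frac{\alpha}{p_1 p_2}$ and $\frac{\beta_2}{p_2}$, so tracking the profit reduces to signing $\partial C/\partial p_2 = -\frac{\alpha}{p_1 p_2^2} - \frac{\beta_2}{p_2^2}$, which is manifestly of one sign; this delivers the monotonicity of the profit in $p_2$. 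For the boundary statement I would set $p_2 = p_1$: then $y_1 = y_2$, the contract collapses to a single intermediate payout, and the coefficient identity $\beta_1 + \beta_2 = \lambda_1 - 2\rho\lambda_{min} + \lambda_2 = \beta$ makes the objective and all constraints coincide term-by-term with those of Proposition \ref{prop: Optimal-whole-portfolio}, so the profit equals its whole-portfolio value.

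The main obstacle is the low-correlation regime, where the binding single-project deviation involves $\lambda_{max}$ and the payout $y_{max}$ of whichever project carries the higher success probability $\lambda$, which need not be the project with the higher $p$. I would therefore split into the subcases $\lambda_{max} = \lambda_1$ and $\lambda_{max} = \lambda_2$ (with $p_1 \geq p_2$ fixed), since the substitution $y_{max} = z/p_1$ or $y_{max} = z/p_2$ alters the denominator of the closed form for $z$ and hence the precise expression whose sign must be checked. Verifying that the monotonicity holds uniformly across both subcases, and that the profit matches continuously across the regime boundary $\rho = \rho^*$, is the one place where care is genuinely needed; everything else is a routine substitution followed by a single sign check.
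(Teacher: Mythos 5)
You should know at the outset that the paper never actually proves Proposition \ref{prop:effciency-loss}: Appendix A contains no entry for it, and all the paper offers is the intuition paragraph following the statement. So your attempt must stand on its own, and it fails at its central step --- in a way that is instructive, because your own computation, carried one line further, \emph{refutes} the sentence you are trying to prove. In the high-correlation regime you correctly reduce everything to $z = 2c/(C-1)$ with $C(p_2) = \frac{\alpha}{p_1p_2} + \frac{\beta_1}{p_1} + \frac{\beta_2}{p_2} + \gamma$, and you correctly note $\partial C/\partial p_2 = -\frac{\alpha}{p_1p_2^2} - \frac{\beta_2}{p_2^2} < 0$. But ``manifestly of one sign'' is not a proof of the claimed direction: $C$ decreasing in $p_2$ makes $z = 2c/(C-1)$, and hence the GP's profit, \emph{increasing} in $p_2$, the opposite of the proposition's ``decreases.'' Concretely, take $\lambda_1=\lambda_2=0.5$, $p_1=0.5$, $\rho=0.9$, $c=1$: your formula gives profit $2/(\frac{0.95}{p_2}-0.45)$, which rises from about $0.22$ at $p_2=0.1$ to the whole-portfolio value $2/1.45\approx 1.38$ at $p_2=p_1$. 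Your boundary argument at $p_2=p_1$ is fine; it is the monotonicity claim whose direction you never checked.

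The gap cannot be patched, because the statement as written is internally inconsistent with facts the paper itself establishes. Every whole-portfolio contract is a feasible conditional contract (set $y_1=y_2$), so the LP's minimized payment to the GP under conditional contracting can never exceed the whole-portfolio payment, which by Proposition \ref{prop: Optimal-whole-portfolio} is independent of $p_2$. If the GP's conditional profit were strictly decreasing in $p_2$ and equal to the whole-portfolio profit at $p_2=p_1$, it would \emph{exceed} the whole-portfolio profit for all $p_2<p_1$ --- a contradiction. More directly, $p_2$ enters the conditional program only through the constraints $z\geq p_2y_2$, $y_1\geq p_2x$ and $z\geq p_1p_2x$, all of which tighten as $p_2$ rises while the objective and effort constraints are unchanged; hence the GP's rent is weakly increasing in $p_2$, full stop. (This argument also sidesteps a secondary weakness of your plan: the characterization $z=p_1y_1=p_2y_2=p_1p_2x$ in Proposition \ref{prop: Optimal conditional}, on which you lean, is itself not always optimal --- at $\lambda_1=\lambda_2=0.5$, $p_1=0.5$, $p_2=0.25$, $\rho=0.1$, $c=1$ the feasible contract $(x,y_1,y_2,z)=(7.6,\,3.6,\,3.8,\,1.8)$ pays the GP $3.8$, strictly less than the $4.846$ paid by the contract with $z=p_2y_2$ binding.) What is true, and what the proposition's title, second sentence, and intuition paragraph are evidently after, is that the LP's \emph{efficiency loss} from using whole-portfolio rather than conditional contracting --- the gap between the GP's ($p_2$-independent) whole-portfolio rent and her conditional rent --- shrinks as $p_2$ increases and vanishes at $p_2=p_1$; equivalently, the GP's conditional profit \emph{increases}, and the LP's \emph{decreases}, in $p_2$. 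A correct write-up should prove that corrected statement (the feasible-set argument above does it in three lines) rather than the literal one.
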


Here is the intuition behind statement above. . As we mentioned in
the discussion after proposition \ref{prop:AI-VS-MH-Reputable}, $p_{1}-p_{2}$
measures the difference between adverse selection that problems are
subject to. When this measure is zero, handling of the problem for
one project, efficiently takes care of the other project as well hence
whole-portfolio contracting becomes the best conditional contract.
As this measure grows, whole-portfolio contracting becomes less and
less efficient and GP can extract more rent on the projects.

\subsection{Bargaining Power\label{sec:GP-bargains}}

In this part, I consider what happens when GP has the bargaining power
for writing the contract. For the moment I assume that the offered
contract has to be incentive compatible to induce the GP to choose
the optimal strategy (justify it at the end). For reputable GP, the
problem becomes 
\begin{align*}
 & \max_{x,y,z}\alpha x+\beta y+\gamma z\\
 & \ensuremath{x\geq y\geq z\geq p_{1}y\geq p_{1}^{2}x};\;\;\;\alpha x+\beta y+\gamma z\geq z+2c,\lambda_{max}y+(1-\lambda_{max})z+c\\
 & E[s_{LP}]\geq2I
\end{align*}
The last inequality is the participation constraint for the LP. Here
notation are the same as in subsection \ref{subsec:Whole-Portfolio-Contract}.
Not surprisingly, in the optimum, the equality $E[s_{LP}]=2I$ happens
otherwise the contract can be altered in the GP's favor without violating
participation constraint by LP (for example by increasing $x$). Hence
unlike investor problem we do not have a unique contract and the contract
only needs to satisfy the incentive compatibility equations. Also,
as we saw in subsection \ref{subsec:Whole-Portfolio-Contract} , $E[s_{LP}]+E[s_{GP}]$
is independent of $\rho.$ Therefore, investor breaks even in the
optimum and the contract is not unique. To justify the imposing of
the incentive compatibility, note that in any feasible contract $E[s_{LP}]\geq2I$
is required. Also the optimal investment strategy guarantees the maximum
possible payout of the projects. Hence imposing them does not reduce
the profit by GP. \\
This result is not surprising as when GP has the market power, since
she is the party who takes the action and also observes the quality
of the project, she can extract all the rent from projects. So in
the presence of the market power by GP, the method of contracting
or correlation does not play a role.

When GP is not reputable hence FNO is imposed, take $E[s_{GP}]=\alpha x+\tilde{\beta}y$
as in section \ref{sec:Non-Reputable-GP} and then the GP problem
becomes 
\begin{align*}
 & \max_{x,y}\alpha x+\tilde{\beta}y\\
 & x\geq y\geq p_{1}x;\;\;\;\alpha x+\tilde{\beta}y\geq\max\{\theta_{max}y+c,p_{1}y+2c\}\\
 & E[s_{LP}]\geq2I
\end{align*}

Similar to the case of reputable GP, investor breaks even and he is
not concerned about the correlation. In addition, if GP can choose
the correlation as well, she maximizes the surplus of the project.
So as we saw in the section \ref{sec:Non-Reputable-GP}, total payout
is decreasing in $\rho$ so she chooses $\rho=0$ in this case. Since
even for $\rho=1,$ total payout of projects is more than deal-by-deal
contract, when GP has market power she chooses whole-portfolio contracting.
This stems from the fact that the whole-portfolio contracting persuade
GP to have a better investment strategy. Hence in summary we get 

\begin{proposition}

When GP has the bargaining power, for reputable GP there is no difference
between deal-by-deal and whole-portfolio contracting. For non-reputable
GP whole-portfolio contracting is preferred for all values of $\rho$.
In addition the contract is not unique and only needs to satisfy the
IC by GP as well as PC by LP. 

\end{proposition}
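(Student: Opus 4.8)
The plan is to handle the reputable and non-reputable cases separately, in each case first pinning down the limited partner's participation constraint and then reading off the general partner's payoff as a residual.

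For the reputable case I start from the stated program and argue that the LP participation constraint binds at the optimum. The total realized payout $\alpha 2R+\beta(R+I)+\gamma 2I$ is fixed once the investment strategy is induced, so it is independent of the shares $(x,y,z)$; hence if $E[s_{LP}]>2I$, raising $x$ slightly increases $E[s_{GP}]$ and lowers $E[s_{LP}]$ by the same amount while preserving the incentive constraints, contradicting optimality. Thus $E[s_{LP}]=2I$ and the GP's expected payout equals $(\text{total payout})-2I$. As established in subsection \ref{subsec:Whole-Portfolio-Contract}, this total payout is independent of $\rho$, and since both the whole-portfolio and the deal-by-deal contract implement the same first-best investment strategy (strategy (4) on each project), each generates the identical total return $(\lambda_1+\lambda_2)(R-I)+2I$. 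The GP therefore obtains the same payoff under either method, which is the first claim.

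To justify imposing incentive compatibility and to get non-uniqueness, I would observe that every feasible contract requires $E[s_{LP}]\ge 2I$, while the optimal investment strategy is exactly the one maximizing total payout; hence the GP's payoff is bounded above by $(\text{max total payout})-2I$, a bound attained by any contract inducing the optimal strategy and driving the LP to break even. Since the objective then depends only on the fixed total payout and not on the split, every contract satisfying the GP's incentive-compatibility conditions together with $E[s_{LP}]=2I$ is optimal, so the solution set is not a single point. For the non-reputable case I repeat the break-even argument to obtain $E[s_{LP}]=2I$ and $E[s_{GP}]=(\text{total payout})-2I$. Now, by section \ref{sec:Non-Reputable-GP}, the total payout $2\alpha R+\tilde\beta(R+I)+(1-\alpha-\tilde\beta)I$ is strictly decreasing in $\rho$ with derivative $\lambda_{min}(p_1R-I)<0$, so if the GP also controls the correlation she selects $\rho=0$. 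To compare with deal-by-deal I argue directly at the level of investment efficiency: both contracts invest in every good project and exert effort on both, differing only in the treatment of bad projects. Under the FNO deal-by-deal contract the GP invests in every bad project, whereas under the whole-portfolio contract she withholds on all bad projects except one, and only when both are bad, choosing the less costly $p_1$ project (for which $I-p_1R\le I-p_2R$). Hence at every $\rho$ the whole-portfolio contract makes weakly fewer and weakly less wasteful negative-NPV investments, so its total payout weakly exceeds (strictly, whenever a bad project arises with positive probability) that of deal-by-deal, giving the preference for whole-portfolio at all $\rho$ and subsuming the boundary case $\rho=1$.

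The main obstacle is the two WLOG-type reductions rather than any calculation. First, I must argue carefully that the GP gains nothing by designing a contract whose best response is a suboptimal investment strategy; this rests on combining the binding LP participation constraint with the fact that the optimal strategy uniquely maximizes total surplus, so that an incentive-compatible contract leaving the LP exactly at $2I$ exists and is optimal. Second, in the non-reputable case the delicate point is establishing the investment-efficiency comparison uniformly in $\rho$ — in particular that whole-portfolio's advantage of avoiding one bad investment persists even at maximal correlation, where the both-bad state is most likely; the monotonicity of total payout in $\rho$ then only pins down the GP's choice of correlation and is not itself the crux.
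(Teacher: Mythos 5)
Your proposal is correct, and for the reputable case it is essentially the paper's own argument: the LP's participation constraint binds, the GP becomes residual claimant of total surplus, both methods implement strategy (4) and hence the same first-best total $(\lambda_1+\lambda_2)(R-I)+2I$, and any incentive-compatible contract with the LP at break-even is optimal, which gives both the indifference and the non-uniqueness. Where you genuinely diverge is the non-reputable comparison. The paper reuses its comparative static --- the total payout $2\alpha R+\tilde\beta(R+I)+(1-\alpha-\tilde\beta)I$ has derivative $\lambda_{min}(p_1R-I)<0$ in $\rho$, so it is minimized at $\rho=1$ --- and then asserts that even at $\rho=1$ the whole-portfolio total exceeds the deal-by-deal total. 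You instead compare quality realization by realization: under the FNO restriction, deal-by-deal wastes $I-p_iR$ on every bad project, while whole-portfolio wastes only $I-p_1R\le I-p_2R$, and only in the both-bad state; since the marginal probabilities of project quality do not depend on $\rho$, this pointwise domination ranks the two methods uniformly in $\rho$ with no endpoint check. Your route is more elementary and in one respect more complete, since it supplies the verification the paper leaves implicit at $\rho=1$ and does not depend on how correlation is parametrized; the paper's route buys consistency with the already-derived monotonicity, which it also needs to conclude that a GP who controls correlation picks $\rho=0$. One small caution, which you share with the paper's own wording: to force the LP's constraint to bind, ``raising $x$ slightly'' can violate $z\ge p_1^2x$ when that constraint binds; the clean perturbation is to scale the whole vector $(x,y,z)$ by $1+\epsilon$, which preserves the (homogeneous) ordering constraints and slackens the effort constraint because its right-hand side grows by strictly less than its left.
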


\section{Conclusion\label{sec:Conclusion}}

In this paper, I proposed a framework to study the scheme of payment
in LPA. Unlike usual contracts which only determine the amount of
payment for a given return, since GP and LP write a contract on a
portfolio, the method of payment is also of vital importance. I compared
the main two methods of payments which are the deal-by-deal and the
whole-portfolio. Within my setting, I showed that the whole-portfolio
contracting is more prevalent when the correlation of investment companies
is high or when the reputation of GP is low. Previously documented
findings support these result.\\
\\
In addition, I make some predictions which can guide future studies.
For example the informativeness of the investor can also affect the
method of contracting and hence the portfolio as well. More informed
investors tend to have more deal-by-deal contract and a diverse portfolio
while less informed ones have a narrower range of investment and more
whole-portfolio contracting. Also when underlying assets are public,
whole-portfolio contracting is the typical method of payment which
is used.

\section{Appendix A: Proofs\label{sec:Appendix-A:-Proofs}}

\subsection{Proof of Proposition \ref{prop: Optimal conditional}}

Let's recall the equations which LP should consider to design the
security. We have 
\begin{align*}
 & z\geq p_{1}p_{2}x,p_{1}y_{1},p_{2}y_{2}\\
 & y_{1}\geq z,p_{2}x\\
 & y_{2}\geq z,p_{1}x\\
 & x\geq z,y_{1},y_{2}
\end{align*}
 where as defined before, $(x,z)=(s(2R),s(2I))$ and $y_{i}$ is the
payout to GP when project $i$ is successful and project $3-i$ is
not. As we saw, LP problem is 
\begin{align*}
 & \min_{x,y,z}\alpha x+\beta_{1}y_{1}+\beta_{2}y_{2}+\gamma z\\
 & \alpha x+\beta_{1}y_{1}+\beta_{2}y_{2}+\gamma z\geq z+2c,\lambda_{i}y_{i}+(1-\lambda_{i})z+c\\
 & \ensuremath{x\geq y_{i}\geq z\geq p_{i}y_{i}\geq p_{1}p_{2}x}
\end{align*}
where $(\alpha,\beta_{1},\beta_{2},\gamma)=(\lambda_{1}\lambda_{2},\lambda_{1}-\rho\lambda_{2},(1-\rho)\lambda_{2},1-\lambda_{1}-\lambda_{2}+\rho\lambda_{2})$.
In the discussion before the proposition, we showed that
\begin{equation}
\alpha x+\beta_{1}y_{1}+\beta_{2}y_{2}+\gamma z=\max\{z+2c,\lambda_{i}y_{i}+(1-\lambda_{i})z+c\}\label{eq:GP-payout-conditional}
\end{equation}

Now we claim that, in the optimum either $z=p_{1}y_{1}$ or $z=p_{2}y_{2}$.
This happens since if $z$ can be reduced, the coefficient of $z$
on the LHS which is $\gamma=1-\lambda_{1}-\lambda_{2}+\rho\lambda_{2}$
is less than (or equal to if $\rho=1)$ the minimum of coefficients
on the RHS of equation (\ref{eq:GP-payout-conditional}) which is
$\min(1-\lambda_{1},1-\lambda_{2}).$ Having this, assume that $z=p_{1}y_{1}\geq p_{2}y_{2}.$
Consider several cases
\begin{itemize}
\item First assume that the maximum on the RHS of equation (\ref{eq:GP-payout-conditional})
happens at $z+2c.$ Then we get
\[
\alpha x=z(1-\frac{\beta_{1}}{p_{1}}-\gamma)-\beta_{2}y_{2}+2c.
\]
 After multiplying by $p_{1}p_{2}$ and using the facts that $y_{2}\leq\frac{z}{p_{2}},x\leq\frac{z}{p_{1}p_{2}}$
we have 
\begin{align*}
 & z[\alpha(1-p_{1}p_{2})+p_{2}\beta_{1}(1-p_{1})+p_{1}\beta_{2}(1-p_{2})]\\
 & \geq2p_{1}p_{2}c
\end{align*}
 with strict inequality if either $y_{2}<\frac{z}{p_{2}}$ or $x<\frac{z}{p_{1}p_{2}}$.
This gives a lower bound for $z$. In the optimum the inequality becomes
equality (as LP looks for the minimum payment to GP) hence we get
$z=p_{1}y_{1}=p_{2}y_{2}=p_{1}p_{2}x.$
\item Now assume that maximum happens at $\lambda_{1}y_{1}+(1-\lambda_{1})z+c.$
Then similar to the previous case, we can write 
\[
\alpha x=\lambda_{1}\frac{z}{p_{1}}+(1-\lambda_{1})z-\beta_{1}\frac{z}{p_{1}}-\beta_{2}y_{2}-\gamma z+c
\]
 Again inequalities as in the previous case gives us a lower bound
for $z$ which is binding in the optimum hence we get the same relationship
between variables.
\item Finally assume that maximum happens at $\lambda_{2}y_{2}+(1-\lambda_{2})z+c$
and assume this is strictly bigger than other two terms ($z+2c$ and
$\lambda_{1}y_{1}+(1-\lambda_{1})z+c).$ If $z=p_{1}y_{1}=p_{2}y_{2},$
then the same reasoning as in the previous case works. So assume that
$z=p_{1}y_{1}>p_{2}y_{2}\geq p_{1}p_{2}x.$ Now reduce $z$ by $\epsilon$
and $y_{1}$ by $\frac{\epsilon}{p_{1}}$ and change $x$ accordingly
to $x'$ such that 
\[
\alpha x'+\beta_{1}(y_{1}-\frac{\epsilon}{p_{1}})+\beta_{2}y_{2}+\gamma(z-\epsilon)=\lambda_{2}y_{2}+(1-\lambda_{2})(z-\epsilon)+c
\]
 If this change is permissible, then the new contract $(x',y_{1}-\frac{\epsilon}{p_{1}},y_{2},z-\epsilon)$
is strictly better for LP. So it would not be permissible. If $x'>x,$this
means that we should have had $p_{1}x=y_{2}$ so that an increase
in $x$ is not permissible. However in this case, the original equation
for the expected payout to GP, becomes 
\[
\alpha(x-y_{2})=\alpha x(1-p_{1})=z(1-\frac{\beta_{1}}{p_{1}}-\gamma-\lambda_{2})+c
\]
 so 
\[
\alpha\frac{z}{p_{1}p_{2}}(1-p_{1})\geq z(1-\frac{\beta}{p_{1}}-\gamma-\lambda_{2})+c
\]
 and again in the optimum this should be equality hence $z=p_{1}p_{2}x$
which proves the claim. If $x'<x,$ then either $x=y_{2}$ or $x=y_{1}.$
But these contracts can not be optimal since reducing all the payouts
$z,y_{1}$ and $y_{2}$ and increasing $x$ is allowed here which
improves payout to LP. 
\end{itemize}

\subsection{Proof of Proposition\ref{prop: Optimal-whole-portfolio} and \ref{prop: WP-VS-DBD-Reputable}}

Let's recall the LP problem. 
\begin{align*}
 & \min_{x,y,z}\alpha x+\beta y+\gamma z\\
 & \alpha x+\beta y+\gamma z\geq z+2c,\lambda_{max}y+(1-\lambda_{max})z+c\\
 & \ensuremath{x\geq y\geq z\geq p_{1}y\geq p_{1}^{2}x}
\end{align*}
where $(\alpha,\beta,\gamma)=(\rho\lambda_{min},\lambda_{1}-2\rho\lambda_{min}+\lambda_{2},1-\lambda_{1}-\lambda_{2}+\rho\lambda_{min})$.
As in the case of Proposition \ref{prop: Optimal conditional}, we
make two preliminary observations. 
\begin{enumerate}
\item In the optimum, we have $\alpha x+\beta y+\gamma z=\max\{z+2c,\lambda_{max}y+(1-\lambda_{max})z+c\}.$
Suppose not. Then the transformation $z\rightarrow z-\epsilon$ for
small enough $\epsilon,$ should violate the conditions of the LP
problem otherwise reducing $z$ improves the payout to LP. This means
$z=p_{1}y$. But then $y\rightarrow y-\epsilon$ should violate the
conditions so similarly we get $y=p_{1}x$ as well. But then $x\rightarrow x-\epsilon$
is legitimate because by $z=p_{1}y=p_{1}^{2}x$ we have $x>y>z.$ 
\item In the optimum, $z=p_{1}y.$ This comes from the fact that in the
LP problem $\gamma=1-\lambda_{1}-\lambda_{2}+\rho\lambda_{min}\leq1-\lambda_{1},1-\lambda_{2}$.
So if the move $z\rightarrow z-\epsilon$ is allowed, the subtraction
on the LHS which is $\gamma\epsilon$ is less than the subtraction
on the RHS which is either $\epsilon$ or $(1-\lambda_{max})\epsilon.$
\end{enumerate}
By point 1 above, the expected payout to GP will be $\max\{z+2c,\lambda_{max}y+(1-\lambda_{max})z+c\}.$
This function is increasing in $z$, equals to $z+2c$ for $z\leq\frac{p_{1}c}{\lambda_{max}(1-p_{1})}$
(since $z=p_{1}y)$ and it is $\lambda_{max}y+(1-\lambda_{max})z+c$
otherwise. Because the payout is increasing in $z,$ GP searches for
a feasible contract with least amount of $z.$ First we see when LP
is able to write a contract with $z\leq\frac{p_{1}c}{\lambda_{max}(1-p_{1})}.$
Here is how the contract is designed. As we saw, in this range of
$z$, 
\[
\alpha x+\beta y+\gamma z=\max\{z+2c,\lambda_{max}y+(1-\lambda_{1})z+c\}=z+2c
\]
 therefore 
\begin{equation}
x=\frac{z(1-\gamma-\frac{\beta}{p_{1}})+2c}{\alpha}\label{eq:x-small z}
\end{equation}
hence GP should solve 
\begin{align*}
 & \min z\\
 & x\geq\frac{z}{p_{1}}\geq p_{1}x\;\;\;z\leq\frac{p_{1}c}{\lambda_{max}(1-p_{1})}
\end{align*}
given $x$ as in the equation (\ref{eq:x-small z}). $p_{1}x\geq z$
implies 
\[
2p_{1}c\geq(1-p_{1})(\alpha+\beta)z
\]
 which implies 
\[
z\leq\frac{2p_{1}c}{(\alpha+\beta)(1-p_{1})}
\]
 On the other hand $z\geq p_{1}^{2}x,$ implies 
\[
[\beta\frac{1-p_{1}}{p_{1}}+\alpha\frac{1-p_{1}^{2}}{p_{1}^{2}}]z\geq2c
\]
 which gives 
\begin{equation}
z\geq\frac{2c}{\beta\frac{1-p_{1}}{p_{1}}+\alpha\frac{1-p_{1}^{2}}{p_{1}^{2}}}\label{eq:z-small z}
\end{equation}
Since 
\[
\frac{2c}{\beta\frac{1-p_{1}}{p_{1}}+\alpha\frac{1-p_{1}^{2}}{p_{1}^{2}}}<\frac{2p_{1}c}{(\alpha+\beta)(1-p_{1})}
\]
 one needs to compare the lower bound on the $z$ from equation (\ref{eq:z-small z})
with the initial condition $z\leq\frac{p_{1}c}{\lambda_{max}(1-p_{1})}$.
This gives us that there is answer in this region if and only if
\[
[\beta\frac{1-p_{1}}{p_{1}}+\alpha\frac{1-p_{1}^{2}}{p_{1}^{2}}]\geq\frac{2\lambda_{max}(1-p_{1})}{p_{1}}
\]
 which simplifies to 
\[
\beta+\alpha\frac{1+p_{1}}{p_{1}}\geq2\lambda_{max}
\]
 Substituting $\alpha$ and $\beta$ gives us that this happens if
and only if 
\[
\rho\geq\frac{\lambda_{max}-\lambda_{min}}{\lambda_{min}(\frac{1}{p_{1}}-1)}
\]
In this case the contract is written with $z$ as given by equality
in the equation (\ref{eq:z-small z}) which is the minimal $z$ hence
$z=p_{1}y=p_{1}^{2}x$. Now consider the case $\rho<\frac{\lambda_{max}-\lambda_{min}}{\lambda_{min}(\frac{1}{p_{1}}-1)}$.
The expected payout to GP then has the form $\lambda_{max}y+(1-\lambda_{max})z+c$
for some $z>\frac{p_{1}c}{\lambda_{max}(1-p_{1})}.$ In this case
we get 
\[
\alpha x+\beta y+\gamma z=\lambda_{max}y+(1-\lambda_{max})z+c=(\frac{\lambda_{max}}{p_{1}}+1-\lambda_{max})z+c
\]
 hence
\[
x=\frac{(\frac{\lambda_{max}}{p_{1}}+1-\lambda_{max}-\frac{\beta}{p_{1}}-\gamma)z+c}{\alpha}
\]
 Similar to the previous case, the optimization becomes 
\begin{align*}
 & \min z\\
 & x\geq\frac{z}{p_{1}}\geq p_{1}x\;\;\;z\geq\frac{p_{1}c}{\lambda_{max}(1-p_{1})}
\end{align*}
 $p_{1}x\geq z$ implies 
\[
[\lambda_{max}+p_{1}-p_{1}\lambda_{max}-\beta-p_{1}\gamma]z+p_{1}c\geq\alpha z
\]
 which is 
\[
p_{1}c\geq(\alpha+\beta-\lambda_{max})(1-p_{1})z
\]
 We have $\alpha+\beta-\lambda_{max}=\lambda_{min}(1-\rho)$ 
\[
z\leq\frac{p_{1}c}{(1-p_{1})\lambda_{min}(1-\rho)}
\]
Finally $z\geq p_{1}^{2}x$ implies 
\[
(\frac{\lambda_{max}}{p_{1}}+1-\lambda_{max}-\frac{\beta}{p_{1}}-\gamma)z+c\leq\frac{\alpha z}{p_{1}^{2}}
\]
 which implies 
\[
c\leq[\frac{\alpha}{p_{1}^{2}}-(\frac{\lambda_{max}}{p_{1}}+1-\lambda_{max}-\frac{\beta}{p_{1}}-\gamma)]z
\]
 We can write this as 
\[
c\leq[\frac{\alpha(1-p_{1}^{2})}{p_{1}^{2}}+\beta\frac{1-p_{1}}{p_{1}}-\lambda_{max}\frac{1-p_{1}}{p_{1}}]z
\]
therefore again the contract has the following form
\begin{align*}
 & z=\frac{c}{\frac{\alpha(1-p_{1}^{2})}{p_{1}^{2}}+\beta\frac{1-p_{1}}{p_{1}}-\lambda_{max}\frac{1-p_{1}}{p_{1}}}\\
 & y=\frac{z}{p_{1}}\\
 & x=\frac{z}{p_{1}^{2}}
\end{align*}
Finally to check the upper bound $\frac{p_{1}c}{(1-p_{1})\lambda_{min}(1-\rho)}$
for $z,$ one needs to verify 
\[
\frac{\alpha(1-p_{1}^{2})}{p_{1}}+\beta(1-p_{1})-\lambda_{max}(1-p_{1})\geq\lambda_{min}(1-p_{1})(1-\rho)
\]
Canceling $1-p_{1},$we get 
\[
\frac{\alpha(1+p_{1})}{p_{1}}+\beta-\lambda_{max}\geq\lambda_{min}(1-\rho)
\]
LHS equals to 
\[
\lambda_{min}(\rho(\frac{1}{p_{1}}-1)+1)\geq\lambda_{min}(1-\rho)
\]
which is obvious. 

The only part from Proposition \ref{prop: WP-VS-DBD-Reputable}, which
needs proof is that when $\rho\geq\frac{\lambda_{max}-\lambda_{min}}{\lambda_{min}(\frac{1}{p_{1}}-1)},$
then LP makes more profit by whole-portfolio contracting. This is
because, as we saw in the proof above, for these values of $\rho,$the
expected profit by GP with whole-portfolio contract is $z+2c$ for
some $z\leq\frac{p_{1}c}{\lambda_{max}(1-p_{1})}$ which is less than
$\Pi_{GP}=2c+\sum_{i=1}^{2}\frac{p_{i}c}{\lambda_{i}(1-p_{i})}$ that
GP makes under the deal-by-deal contract. 

\subsection{Proof of Proposition \ref{prop:Comparative-statics}}

As we saw in the Proof of Proposition \ref{prop: Optimal-whole-portfolio},
when $\rho>\rho^{*},$ the expected payout to GP is $z+2c.$ The expression
for $z$ in this region is 
\[
z=\frac{2c}{\beta\frac{1-p_{1}}{p_{1}}+\alpha\frac{1-p_{1}^{2}}{p_{1}^{2}}}
\]
 So it is only needed to look at how denominator changes when parameters
change. Denominator is equal to 
\begin{align*}
 & \beta\frac{1-p_{1}}{p_{1}}+\alpha\frac{1-p_{1}^{2}}{p_{1}^{2}}\\
 & =\frac{1-p_{1}}{p_{1}}[\alpha\frac{1+p_{1}}{p_{1}}+\beta]
\end{align*}
Derivative with respect to $p_{1}$ becomes 
\[
-\frac{1}{p_{1}^{2}}[\alpha\frac{1+p_{1}}{p_{1}}+\beta]+\frac{1-p_{1}}{p_{1}}\times\alpha\frac{-1}{p_{1}^{2}}<0
\]
With respect to $\lambda_{min}$ and $\lambda_{max}$, derivatives
are $\frac{1-p_{1}}{p_{1}}[\rho\frac{1+p_{1}}{p_{1}}+1-2\rho]$ and
$\frac{1-p_{1}}{p_{1}}$ and both are positive. This proves the proposition
in the region $\rho>\rho^{*}.$ In the region $\rho<\rho^{*},$the
expected payout to GP is $\lambda_{max}y+(1-\lambda_{max})z=\frac{\lambda_{max}(1-p_{1})+p_{1}}{p_{1}}z.$
The expression for $z$ in this area is 
\[
z=\frac{c}{\frac{\alpha(1-p_{1}^{2})}{p_{1}^{2}}+\beta\frac{1-p_{1}}{p_{1}}-\lambda_{max}\frac{1-p_{1}}{p_{1}}}
\]
 with respect to $\lambda_{min},$the derivative of the denominator
is same as above. With respect to $\lambda_{max}$, the derivative
of $z$ is zero. However since the expected payout in this regime
is $\frac{\lambda_{max}(1-p_{1})+p_{1}}{p_{1}}z,$it is increasing.
Finally with respect to $p_{1},$the denominator for $\frac{1-p_{1}}{p_{1}}z$,
is $\frac{\alpha(1+p_{1})}{p_{1}}+\beta-\lambda_{max}.$ Derivative
with respect to $p_{1}$ of this term is $-\frac{\alpha}{p_{1}^{2}}<0$
hence denominator is decreasing and the whole term is increasing.
This finishes the argument. 

\subsection{Proof of Proposition \ref{prop:whole-portfolio-non-reputable}}

As we mentioned in the discussion proceeding the Proposition \ref{prop:whole-portfolio-non-reputable},
in the optimal contract we have 
\[
\alpha x+\tilde{\beta}y=\max\{\theta_{max}y+c,p_{1}y+2c\}
\]
By this, LP problem can be written as
\begin{align*}
 & \min_{x,y}\alpha x+\tilde{\beta}y\\
 & x\geq y\geq p_{1}x;\;\;\;\alpha x+\tilde{\beta}y=\max\{\theta_{max}y+c,p_{1}y+2c\}
\end{align*}
Similar to the proof of Proposition\ref{prop: Optimal-whole-portfolio},
we consider two possible cases for $y$.
\begin{itemize}
\item If $y\leq\frac{c}{\theta_{max}-p_{1}}=\frac{c}{\lambda_{max}(1-p_{1})}$,
then we $p_{1}y+2c=\max\{\theta_{max}y+c,p_{1}y+2c\},$hence one gets
\[
x=f(y)=\frac{(p_{1}-\tilde{\beta})y+2c}{\alpha}
\]
 by $p_{1}y+2c=\alpha x+\tilde{\beta}y.$ Therefore LP problem becomes
\begin{align*}
 & \min y\\
 & \min\{f(y),\frac{c}{\theta_{m}-p_{1}}\}\geq y\geq p_{1}f(y)
\end{align*}
 $y\geq p_{1}f(y)$ implies 
\begin{equation}
y\geq\frac{2p_{1}c}{\alpha-p_{1}(p_{1}-\tilde{\beta})}\label{eq:trivial-y-NR}
\end{equation}
 $x\geq y$ implies 
\[
(\alpha-(p_{1}-\tilde{\beta}))y\leq2c
\]
 if $\alpha-(p-\tilde{\beta})>0,$ then 
\[
y\leq\frac{2c}{(\alpha-(p_{1}-\tilde{\beta}))}
\]
otherwise always (\ref{eq:trivial-y-NR}) is satisfied. Note that
if $\alpha-(p-\tilde{\beta})>0,$ then
\[
\frac{2p_{1}c}{\alpha-p_{1}(p_{1}-\tilde{\beta})}\leq\frac{2c}{(\alpha-(p_{1}-\tilde{\beta}))}
\]
So there is answer satisfying $y\leq\frac{c}{\theta_{max}-p_{1}}$
if
\begin{equation}
\frac{2p_{1}c}{\alpha-p_{1}(p_{1}-\tilde{\beta})}\leq\frac{c}{\theta_{max}-p_{1}}\label{eq:small-y-NR}
\end{equation}
In which case $y=\frac{2p_{1}c}{\alpha-p_{1}(p_{1}-\tilde{\beta})}$
and $x=f(y)=\frac{y}{p_{1}}.$ Equation (\ref{eq:small-y-NR}) holds
if 
\[
\alpha-p_{1}^{2}+p_{1}\tilde{\beta}\geq2\lambda_{max}p_{1}(1-p_{1})
\]
 which, after canceling and factoring $(1-p_{1}),$becomes 
\[
\lambda_{min}[(1-p_{1})\rho+p_{1}]\geq\lambda_{max}p_{1}
\]
 which is equivalent to 
\[
\rho\geq\frac{\lambda_{max}-\lambda_{min}}{\lambda_{min}(\frac{1}{p_{1}}-1)}
\]
\item If $\rho<\frac{\lambda_{max}-\lambda_{min}}{\lambda_{min}(\frac{1}{p_{1}}-1)}$,
then the contract with $y\leq\frac{c}{\lambda_{max}(1-p_{1})}$ is
not possible. In this case, 
\[
f(y)=x=\frac{(\theta_{max}-\tilde{\beta})y+c}{\alpha}
\]
 So LP should solve 
\begin{align*}
 & \min y\\
 & f(y)\geq y\geq\max\{p_{1}f(y),\frac{c}{\lambda_{max}(1-p_{1})}\}
\end{align*}
 $y\geq p_{1}f(y)$ implies 
\[
y\geq\frac{p_{1}c}{\alpha-p_{1}(\theta_{max}-\tilde{\beta})}
\]
 $x\geq y$ implies 
\[
(\alpha-(\theta_{max}-\tilde{\beta}))y\leq c
\]
 if $\alpha-(\theta_{max}-\tilde{\beta})>0$ this means 
\[
y\leq\frac{c}{(\alpha-(\theta-\tilde{\beta}))}
\]
 Otherwise always it is satisfied. In any case, in this region, the
possible minimum for $y$ is $\frac{p_{1}c}{\alpha-p_{1}(\theta_{max}-\tilde{\beta})},$
in which case $x=\frac{y}{p_{1}}.$
\end{itemize}
Finally to show the equality $s^{FNO}(R+I)=s(R+I),$ note that $\tilde{\beta}=\beta+p_{1}(1-\alpha-\beta).$
Hence in the region $\rho<\frac{\lambda_{max}-\lambda_{min}}{\lambda_{min}(\frac{1}{p_{1}}-1)},$
it is enough to show 
\begin{align*}
 & \alpha(1-p_{1}^{2})+\beta p_{1}(1-p_{1})-\lambda_{max}p_{1}(1-p_{1})=\\
 & \alpha-p_{1}(\lambda_{max}+p_{1}(1-\lambda_{max}))+p_{1}(\beta+p_{1}(1-\alpha-\beta))
\end{align*}
 which is correct after simplification. In the region $\rho\geq\frac{\lambda_{max}-\lambda_{min}}{\lambda_{min}(\frac{1}{p_{1}}-1)}$
similar algebra works.

\subsection{Proof of Proposition \ref{prop:reputable-VS-NR}}

We use the fact that $s^{FNO}(R+I)=s(R+I)$ which was proved in Proposition
\ref{prop:whole-portfolio-non-reputable}. I claim, expected payout
to both types of GPs are the same under whole-portfolio contracting.
For $\rho<\rho^{*}$ both are equal to $p_{1}y+2c=z+2c.$ In the region
$\rho>\rho^{*},$ for non-reputable GP, expected payout is equal to
\begin{align*}
 & \theta_{max}y+c=\\
 & \lambda_{max}y+(1-\lambda_{max})p_{1}y+c
\end{align*}
 By Proposition \ref{prop: Optimal-whole-portfolio} is the same as
$\lambda_{max}y+(1-\lambda_{max})z+c$ which is the expected payout
to reputable GP by the same proposition. Also as we saw by equation
(\ref{eq:profit-GP-FNO}) and discussion after it, the expected payouts
to both types of GP under deal-by-deal are the same as well. Now in
the reputable case, both types of contracting induce the optimal investment
strategy of exerting effort and invest only in good project. However
for non-reputable GP, whole-portfolio contracting improves total payout
of the projects compared to deal-by-deal as we saw in Subsection \ref{subsec:Whole-Portfolio-Contracting-With}.
Therefore it can only increase the profit of the LP compared to the
reputable case since the expected payout to GP is the same for both
types of GPs. This completes the argument. 

\subsection{Proof of Proposition \ref{prop: general-cost}}

With given cost functions, in the deal-by-deal, the effort $\lambda_{1}$
is determined by 
\[
R-I=am\lambda_{1}^{m-1}+am(m-1)\lambda_{1}^{m-1}=am^{2}\lambda_{1}^{m-1}
\]
 so $\lambda_{1}=\sqrt[m-1]{\frac{R-I}{am^{2}}}$ and the profit from
first project is $\lambda_{1}^{2}c"(\lambda_{1})=a\lambda_{1}^{2}m(m-1)\lambda_{1}^{m-2}.$
For the second project we need only to change $a$ to $b$. In the
whole-portfolio case, using equations 
\begin{align*}
 & bm\lambda_{2}^{m-1}=c_{2}'(\lambda_{2})=\lambda_{1}s_{GP}(2R)\\
 & am\lambda_{1}^{m-1}=c_{1}'(\lambda_{1})=\lambda_{2}s_{GP}(2R)
\end{align*}
we have $am\lambda_{1}^{m}=bm\lambda_{2}^{m}$ which gives $\lambda_{2}=C\lambda_{1}$
where $C=\sqrt[m]{\frac{a}{b}}.$ Then second equation above gives
\[
\frac{a}{C}m\lambda_{1}^{m-2}=s_{GP}(2R)
\]
 Therefore LP problem can be written as ($\lambda_{1}=\lambda)$ 
\[
\max_{\lambda}C\lambda^{2}[2R-s_{GP}(2R)]+[\lambda(1-C\lambda)+C\lambda(1-\lambda)](R+I)+(1-\lambda)(1-C\lambda)2I-2I
\]
 From above $C\lambda^{2}s_{GP}(2R)=am\lambda^{m}$ and the profit
is 
\begin{align*}
 & 2C\lambda^{2}R-am\lambda^{m}+\lambda R-C\lambda^{2}R+\lambda I-C\lambda^{2}I\\
 & +C\lambda R-C\lambda^{2}R+C\lambda I-C\lambda^{2}I-2\lambda I-2C\lambda I+2C\lambda^{2}I
\end{align*}
 which simplifies to 
\[
(\lambda+C\lambda)(R-I)-am\lambda^{m}
\]
 FOC gives 
\[
(1+C)(R-I)=am^{2}\lambda^{m-1}
\]
 so 
\[
\lambda_{1}=\lambda=\sqrt[m-1]{\frac{(1+C)(R-I)}{am^{2}}}
\]
 Total profit by LP in this case can be written as 
\[
\lambda(1+C)(R-I)[1-\frac{1}{m}]
\]
 So in order to show that LP makes more money with whole-portfolio
compared to deal-by-deal, we have to show 
\[
\sqrt[m-1]{\frac{(1+C)}{a}}(1+C)>\sqrt[m-1]{\frac{1}{a}}+\sqrt[m-1]{\frac{1}{b}}
\]
since $b=\frac{a}{C^{m}},$ this simplifies to show 
\[
\sqrt[m-1]{(1+C)}(1+C)>1+\sqrt[m-1]{C^{m}}
\]
which is equivalent to (set $C=d^{m-1})$
\[
(1+d^{m-1})^{m}>(1+d^{m})^{m-1}
\]
Since the problem is symmetric with respect to $a$ and $b$ we can
assume $a\leq b$ hence $d\leq1,$which makes the inequality above
trivial.

\section{Appendix B : Robustness\label{sec:Appendix-B-:Robustness}}

\subsection{Increasing Assumption\label{subsec:Increasing-Assumption}}

The security defined in Proposition \ref{prop: Optimal-whole-portfolio}
is increasing on the set of possible payouts with positive probability
(i.e on equilibrium path). However if the payout of good project is
not $R$ for sure or GP makes a wrong decision, it is possible to
get a payout of size $R$. To show robustness of our main result on
the relation between correlation and security design, we have the
following proposition. 

\begin{proposition}

If the condition $s(R)\geq s(2I)$ is imposed to the security $s_{GP}=s$,
the result of Proposition \ref{prop: WP-VS-DBD-Reputable} remains
unchanged if $\frac{1}{4}\geq p_{1}\geq p_{2}.$

\end{proposition}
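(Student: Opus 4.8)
The plan is to re-solve the whole-portfolio program after adjoining the single extra inequality $s(R)\ge s(2I)$, and then check that the two ingredients driving Proposition \ref{prop: WP-VS-DBD-Reputable}---that the total surplus is independent of $\rho$ and that the GP's expected payout is decreasing in $\rho$---both survive. Write $w=s(R)$ and keep $x=s(2R)$, $y=s(R+I)$, $z=s(2I)$ as before. Since the outcome $R$ is off the equilibrium path, $w$ never enters the GP's on-path expected payout $\alpha x+\beta y+\gamma z$; it appears only in the adverse-selection constraints, where a larger $w$ makes deviating to a bad project more tempting. Hence the LP wants $w$ as small as the new monotonicity constraint allows, so the first step is to argue that $w=z$ at the optimum.

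With $w=z$ the incentive constraints for effort are unchanged in form: the deviation values $z+2c$ and $\lambda_{max}y+(1-\lambda_{max})z+c$ are unaffected, because the adverse-selection constraints still force the GP to keep bad projects in the safe asset off the equilibrium path. The adverse-selection constraints themselves tighten: the ``invest the lone bad project'' constraint becomes $y\ge p_1 x+(1-p_1)z$, the ``invest one of two bad projects'' constraint stays $z\ge p_1 y$, and the ``invest both bad projects'' constraint becomes $z\ge p_1p_2 x+(p_1+p_2-2p_1p_2)z$. Following the argument of Proposition \ref{prop: Optimal-whole-portfolio}, I would show $z=p_1y$ binds and that the maximal feasible prize $x$ is pinned down by whichever of the other two constraints is tighter; a short computation shows the first is tighter precisely when $p_2\le p_1/(1+p_1)$, and crucially this dividing line depends only on $p_1,p_2$, not on $\rho$. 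So for fixed $(p_1,p_2)$ exactly one of the two constraints binds for every $\rho$, giving $x$ as a $\rho$-independent multiple of $z$, either $x=\frac{1-p_1+p_1^2}{p_1^2}\,z$ or $x=\frac{(1-p_1)(1-p_2)+p_1p_2}{p_1p_2}\,z$.

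Plugging this into the binding moral-hazard constraint yields a closed form $z=z(\rho)$ in each effort regime (the payout equals $z+2c$ for large $\rho$ and $\lambda_{max}y+(1-\lambda_{max})z+c$ for small $\rho$), exactly as in Proposition \ref{prop: Optimal-whole-portfolio}. The GP's expected payout is increasing in $z$, so it suffices to show $z(\rho)$ is decreasing. In each case $z(\rho)$ equals $c$ (or $2c$) divided by $\frac{1-p_1}{p_1}$ times a linear function of $\rho$, and differentiating reduces the question to the sign of a single coefficient: it is $\lambda_{min}\left(\frac{1}{p_1}-2\right)$ when the first constraint binds and $\lambda_{min}\left(\frac{1}{p_2}-3\right)$ when the second binds. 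The hypothesis $\tfrac14\ge p_1\ge p_2$ makes both coefficients positive (it gives $p_1<\tfrac12$ and $p_2\le p_1<\tfrac13$), so $z(\rho)$ is decreasing in both regimes and both sub-cases, and remains so at the regime boundary by continuity. Meanwhile the total surplus $\alpha\,2R+\beta(R+I)+\gamma\,2I$ still has $\rho$-derivative $2\lambda_{min}R-2\lambda_{min}(R+I)+2\lambda_{min}I=0$ regardless of the contract, so that piece of the argument is untouched.

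Finally I would assemble the conclusion as in Proposition \ref{prop: WP-VS-DBD-Reputable}: total surplus is constant in $\rho$ while the GP's expected payout decreases, so the LP's expected payout increases in $\rho$; since the deal-by-deal payoff is independent of $\rho$, there is a single threshold $\rho^{**}$ with whole-portfolio contracting preferred above it and deal-by-deal below. The main obstacle is the middle step: the extra inequality re-couples the adverse-selection constraints and shrinks the maximal incentive prize $x$ relative to $z$ (from $z/p_1^2$ down to roughly $(1-p_1+p_1^2)z/p_1^2$), which is exactly what weakens the $\rho$-monotonicity that held unconditionally in the unconstrained problem. Tracking which adverse-selection constraint binds and verifying that the resulting $\rho$-derivative coefficients stay positive is where the smallness hypothesis $p_1\le\tfrac14$ is spent.
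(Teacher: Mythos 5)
Your proposal is correct and follows essentially the same route as the paper's own proof: impose $s(R)=s(2I)=z$, observe that the moral-hazard constraints keep their form while the adverse-selection constraints tighten to $z\ge p_{1}y$, $y\ge p_{1}x+(1-p_{1})z$, $z\ge p_{1}p_{2}x+(p_{1}+p_{2}-2p_{1}p_{2})z$, and then show $z(\rho)$ is decreasing in both effort regimes so that the constant-surplus argument delivers the threshold $\rho^{**}$. Your exact dividing line $p_{2}\le p_{1}/(1+p_{1})$ and coefficients $\lambda_{min}\left(\frac{1}{p_{1}}-2\right)$ and $\lambda_{min}\left(\frac{1}{p_{2}}-3\right)$ are precisely the paper's condition $\frac{1-\theta_{i}}{\theta_{i}}\ge 2\frac{1-p_{1}}{p_{1}}$, computed exactly rather than bounded via the worst case $p_{2}=p_{1}$ (and they show the hypothesis $p_{1}\le\frac{1}{4}$ actually has slack).
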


\begin{proof}

Take $(z,w,x,y)=(s(2I),s(R),s(R+I),s(2R)).$ Similar to the case of
security with no restriction, in the optimum $s(I)=0.$ Also since
$R$ is not outcome of optimal investment strategy, it should be the
lowest possible value such that the security remains increasing hence
$s(R)=s(2I)=z$. With the same reason as discussed in subsection \ref{subsec:Whole-Portfolio-Contract},
to motivate for optimal investment strategy, contract should satisfy
\begin{align*}
 & z\geq p_{1}p_{2}x+[p_{1}(1-p_{2})+p_{2}(1-p_{1})]z,p_{1}y\\
 & y\geq p_{1}x+(1-p_{1})z\\
 & x\geq z,y
\end{align*}
Also to motivate effort, 
\begin{align*}
 & \rho\lambda_{min}x+[\lambda_{1}-2\rho\lambda_{min}+\lambda_{2}]y+[1-\lambda_{1}-\lambda_{2}+\rho\lambda_{min}]z\\
 & \geq z+2c,\lambda_{max}y+(1-\lambda_{max})z+c
\end{align*}
Therefore LP problem is

\[
\min_{x,y,z}\alpha x+\beta y+\gamma z
\]
subject to conditions above. Here as before, $\text{ (\ensuremath{\alpha},\ensuremath{\beta},\ensuremath{\gamma})=(\ensuremath{\rho\lambda_{min}},\ensuremath{\lambda_{1}}-2\ensuremath{\rho\lambda_{min}}+\ensuremath{\lambda_{2}},1-\ensuremath{\lambda_{1}}-\ensuremath{\lambda_{2}}+\ensuremath{\rho\lambda_{min}})}.$
With similar argument as in the main case, we have 
\begin{align*}
 & \alpha x+\beta y+\gamma z=\max\{z+2c,\lambda_{max}y+(1-\lambda_{max})z+c\}\\
 & z=\max\{p_{1}p_{2}x+[p_{1}(1-p_{2})+p_{2}(1-p_{1})]z,p_{1}y\}
\end{align*}
Based on which terms becomes maximum on the RHS of the first equality,
I consider the following two cases
\begin{enumerate}
\item First consider the case $z\leq\frac{p_{1}c}{\lambda_{max}(1-p_{1})}$.
In this case the maximum payout will be $z+2c$ so we have 
\[
\alpha x+\beta y+\gamma z=z+2c
\]
 by this 
\[
x=\frac{z(1-\gamma)-\beta y+2c}{\alpha}
\]
 Now divide this case to two sub-cases. 
\begin{itemize}
\item $p_{1}y=\max\{p_{1}p_{2}x+[p_{1}(1-p_{2})+p_{2}(1-p_{1})]z,p_{1}y\}=z.$
As the result 
\[
x=\frac{z(1-\gamma-\frac{\beta}{p_{1}})+2c}{\alpha}
\]
 hence the problem for LP can be written as 
\begin{align*}
 & \min z\\
 & x\geq y=\frac{z}{p_{1}}\\
 & y\geq p_{1}x+(1-p_{1})z\\
 & z=p_{1}y\geq p_{1}p_{2}x+[p_{1}(1-p_{2})+p_{2}(1-p_{1})]z
\end{align*}
 The last two inequalities can be written as 
\begin{align*}
 & z\geq\frac{p_{1}^{2}}{1-p_{1}+p_{1}^{2}}x=\text{\ensuremath{\theta_{1}x}}\\
 & z\geq\frac{p_{1}p_{2}}{1-p_{1}+2p_{1}p_{2}-p_{2}}x=\theta_{2}x
\end{align*}
 RHS of the second inequality is increasing in $p_{2}.$ So if $p_{2}\leq p^{*},$
for some $p^{*}\geq0,$ the first inequality is effective otherwise
the second one is. In any case, we get 
\[
\alpha z/\theta_{i}\geq z(1-\gamma-\frac{\beta}{p_{1}})+2c
\]
 so for optimal $z$ we have 
\[
z=\frac{2c}{\beta\frac{1-p_{1}}{p_{1}}+\alpha\frac{1-\theta_{max}}{\theta_{max}}}
\]
 which is similar to the formula as in Proposition \ref{prop: Optimal-whole-portfolio}.
When comparing with the initial inequality, we get 
\[
\alpha\frac{1-\theta_{i}}{\theta_{i}}+\beta\frac{1-p_{1}}{p_{1}}\geq\frac{2\lambda_{max}(1-p_{1})}{p_{1}}
\]
 so similar to the main case, if we have 
\[
\alpha\frac{1-\theta_{i}}{\theta_{i}}+\beta\frac{1-p_{1}}{p_{1}}
\]
 is increasing in $\rho,$ then we have shown the proposition. The
derivative with respect to $\rho$ is 
\[
\lambda_{min}(\frac{1-\theta_{i}}{\theta_{i}}-2\frac{1-p_{1}}{p_{1}})
\]
Maximum value for $\theta_{i}$ is when $p_{1}=p_{2}$ and for this
case by $p_{1}\leq\frac{1}{4}$ we get $\frac{1-\theta_{i}}{\theta_{i}}\geq2\frac{1-p_{1}}{p_{1}}$
hence the derivative above is always positive. This shows that payout
to LP is increasing in $\rho$ in this case. 
\item $p_{1}p_{2}x+[p_{1}(1-p_{2})+p_{2}(1-p_{1})]z=\max\{p_{1}p_{2}x+[p_{1}(1-p_{2})+p_{2}(1-p_{1})]z,p_{1}y\}=z$.
From this we get 
\[
x=\frac{1-[p_{1}(1-p_{2})+p_{2}(1-p_{1})]}{p_{1}p_{2}}z=qz
\]
 on the other hand, we have 
\[
x=\frac{z(1-\gamma)-\beta y+2c}{\alpha}
\]
 so we have 
\[
\beta y=2c-\alpha qz+z(1-\gamma)
\]
 which gives $y$ in terms of $z$. So LP problem is 
\[
\min z
\]
 where $p_{1}y\leq z$ and $y\geq p_{1}x+(1-p_{1})z=(p_{1}q+1-p_{1})z.$
If the coefficient in the later inequality is bigger than $\frac{1}{p_{1}},$there
is no possible solution. Otherwise, when insert $y$ from equality
above in the inequality $p_{1}y\leq z,$ in the optimal it binds hence
$p_{1}y=z$ and problem reduces to the previous case. 
\end{itemize}
\item Now consider the case $z\geq\frac{p_{1}c}{\lambda_{max}(1-p_{1})}$.
As saw above this happens for small $\rho.$ In this case we have
\begin{align*}
 & \alpha x+\beta y+\gamma z=\max\{z+2c,\lambda_{max}y+(1-\lambda_{max})z+c\}\\
 & =\lambda_{max}y+(1-\lambda_{max})z+c
\end{align*}
 like in the previous case, we consider two sub-cases. 
\begin{itemize}
\item $p_{1}y=\max\{p_{1}p_{2}x+[p_{1}(1-p_{2})+p_{2}(1-p_{1})]z,p_{1}y\}=z.$
Then the equation for expected payout can be written as 
\[
x=\frac{(\frac{\lambda_{max}}{p_{1}}+1-\lambda_{max}-\frac{\beta}{p_{1}}-\gamma)z+c}{\alpha}
\]
 similar to the previous case, LP problem becomes 
\begin{align*}
 & \min z\\
 & x\geq y=\frac{z}{p_{1}}\\
 & y\geq p_{1}x+(1-p_{1})z\\
 & z=p_{1}y\geq p_{1}p_{2}x+[p_{1}(1-p_{2})+p_{2}(1-p_{1})]z
\end{align*}
 $p_{1}x\geq z$ implies that 
\[
p_{1}c\geq(\alpha+\beta-\lambda_{max})(1-p_{1})z
\]
 which in turns implies that 
\[
z\leq\frac{p_{1}c}{(1-p_{1})\lambda_{min}(1-\rho)}
\]
 similar to the previous case, we can write the last two inequalities
as 
\begin{align*}
 & z\geq\frac{p_{1}^{2}}{1-p_{1}+p_{1}^{2}}x=\text{\ensuremath{\theta_{1}x}}\\
 & z\geq\frac{p_{1}p_{2}}{1-p_{1}+2p_{1}p_{2}-p_{2}}x=\theta_{2}x
\end{align*}
 In any case this gives us 
\[
\alpha z/\theta_{i}\geq(\frac{\lambda_{max}}{p_{1}}+1-\lambda_{max}-\frac{\beta}{p_{1}}-\gamma)z+c
\]
which gives us the optimal $z$ as the bigger term of two inequalities
above is binding. Hence we have 
\[
z=\frac{c}{\frac{\alpha(1-\theta_{max})}{\theta_{max}}+\beta\frac{1-p_{1}}{p_{1}}-\lambda_{max}\frac{1-p_{1}}{p_{1}}}
\]
 Again this is similar to the formula we have as in Proposition \ref{prop: Optimal-whole-portfolio}
and with the same reason as above $z$ is decreasing in $\rho$ which
shows our claim in this case.
\item $p_{1}p_{2}x+[p_{1}(1-p_{2})+p_{2}(1-p_{1})]z=\max\{p_{1}p_{2}x+[p_{1}(1-p_{2})+p_{2}(1-p_{1})]z,p_{1}y\}=z.$
As previous case, this gives 
\[
x=\frac{1-[p_{1}(1-p_{2})+p_{2}(1-p_{1})]}{p_{1}p_{2}}z=qz
\]
 On the other hand we have 
\[
\alpha x+\beta y+\gamma z=\lambda_{max}y+(1-\lambda_{max})z+c
\]
 which gives 
\[
\text{}(\beta-\lambda_{max})y=(1-\lambda_{max}-\gamma-\alpha q)z+c
\]
 so LP problem is 
\[
\min z
\]
 $p_{1}y\leq z$ and $y\geq p_{1}x+(1-p_{1})z=(p_{1}q+1-p_{1})z.$
Similar to the case we studied before either this does not have solution
or we get $p_{1}y=z$ in the optimum hence it reduces to the previous
case. This finishes the argument.
\end{itemize}
\end{enumerate}
\end{proof}

\subsection{Return Distribution}

Here I want to generalize the distribution function for the projects.
Let's assume the support of both types of projects $G$ and $B$ are
${0,R_{1},R_{2}}.$ For the $G$ type the chances are $\{0,p,1-p\}$
respectively and for the \textbf{$B$ }type it is $\{1-p_{1}-p_{2},p_{1},p_{2}\}.$
All the other variables, definitions and assumptions are the same
as in the main model. First want to see how contract on one project
is written. In order to persuade the optimal investment strategy (not
investing on bad project), we should have 
\begin{equation}
s_{GP}(I)\geq p_{1}s_{GP}(R_{1})+p_{2}s_{GP}(R_{2})\label{eq:fee inequality}
\end{equation}
 Also in order to motivate effort, we have 
\begin{equation}
E[s_{GP}(G)]=ps_{GP}(R_{1})+(1-p)s_{GP}(R_{2})\geq s_{GP}(I)+\frac{c}{\lambda}\label{eq:pay inequality}
\end{equation}
 In the optimal, with the same reasoning as in the binary case, both
these inequalities are binding to minimize the expected payout to
GP. LP problem is 
\[
\min_{s_{GP}(R_{1}),s_{GP}(R_{2})}\lambda E[s_{GP}(G)]+(1-\lambda)s_{GP}(I)
\]
 Conditioned to equations (equities in optimum) (\ref{eq:fee inequality})
and (\ref{eq:pay inequality}). Because of optimality, LP problem
can be written as 
\[
\min_{x,y}[\lambda p+(1-\lambda)p_{1}]x+[\lambda(1-p)+(1-\lambda)p_{2}]y
\]
 where $(x,y)=(s_{GP}(R_{1}),s_{GP}(R_{2}))$. The relation between
$x$ and $y$ comes from (\ref{eq:pay inequality}) above which can
be written as $x=\gamma y+\zeta$ where

\begin{align}
 & \gamma=-\frac{1-p-p_{2}}{p-p_{1}}\nonumber \\
 & \zeta=\frac{c}{\lambda}\label{eq:coefficients}
\end{align}
 So the minimization problem is linear in $y$ and hence in the optimum
either we have $x=0$ or $y=0$ as none of payouts can be negative.
More precisely, the coefficient of $y$ in the LP problem is 
\[
-[\lambda p+(1-\lambda)p_{1}]\frac{1-p-p_{2}}{p-p_{1}}+[\lambda(1-p)+(1-\lambda)p_{2}]
\]
 if this coefficient is positive then we should have $y=0$ otherwise
$x=0.$ Whichever happens, we get the value of the other variable
from equation $x=\gamma y+\zeta$ above. If the payout of the projects
has $n$ different values, the same conclusion holds as the problem
is linear in payouts. In summary we have 

\begin{proposition}

With setup as above, the optimal contract satisfies either $s(R_{1})=0$
or $s(R_{2})=0.$ In particular $s(R_{2})=0$ if and only if 
\[
[\lambda(1-p)+(1-\lambda)p_{2}](p-p_{1})>[\lambda p+(1-\lambda)p_{1}](1-p-p_{2})
\]
 The other one is computed by the equation $x=\gamma y+\zeta,$ where
$\gamma,\zeta$ are given in equations (\ref{eq:coefficients}). Finally
$s(I)$ is computed from the equation (\ref{eq:fee inequality}) when
it is equality.

\end{proposition}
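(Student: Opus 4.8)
The plan is to treat the principal's problem as a two-variable linear program in the payouts $(x,y)=(s(R_1),s(R_2))$ and to show the optimum sits at a corner. First I would justify that both incentive constraints bind at the optimum, exactly as in the binary model. If the adverse-selection constraint (\ref{eq:fee inequality}) were slack, the LP could lower $s(I)$, which carries a strictly positive weight $1-\lambda$ in the expected payment and which only relaxes (\ref{eq:pay inequality}); hence (\ref{eq:fee inequality}) binds. Given that, if the effort constraint (\ref{eq:pay inequality}) were slack, scaling all payouts down by a common factor keeps $s(I)=p_1x+p_2y$ proportionate, strictly lowers the objective, and keeps (\ref{eq:pay inequality}) valid until it binds; hence (\ref{eq:pay inequality}) binds as well. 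These are the ``same reasoning as the binary case'' perturbation arguments.

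With both constraints holding as equalities, I would eliminate $s(I)$ via $s(I)=p_1x+p_2y$ and rewrite (\ref{eq:pay inequality}) as the single affine relation $(p-p_1)x+(1-p-p_2)y=c/\lambda$, equivalently $x=\gamma y+\zeta$ with slope $\gamma=-\tfrac{1-p-p_2}{p-p_1}$ as in (\ref{eq:coefficients}). Substituting this into the LP's expected payment $[\lambda p+(1-\lambda)p_1]x+[\lambda(1-p)+(1-\lambda)p_2]y$ yields a function that is \emph{affine in the single variable} $y$. Under the natural ordering $p>p_1$ and $1-p>p_2$ one has $\gamma<0$, so the feasible set is the line segment cut from the relation by $x\ge 0$ and $y\ge 0$; an affine function on a segment attains its minimum at an endpoint. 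The two endpoints are exactly $y=0$ (so $x=\zeta$) and $x=0$ (so $y=-\zeta/\gamma$), whence one of $s(R_1),s(R_2)$ vanishes at the optimum.

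To pin down which one, I would read off the coefficient of $y$ in the reduced objective, namely $[\lambda(1-p)+(1-\lambda)p_2]-[\lambda p+(1-\lambda)p_1]\tfrac{1-p-p_2}{p-p_1}$. The minimizer is $y=0$, i.e. $s(R_2)=0$, precisely when this coefficient is positive, which after clearing the positive factor $p-p_1$ is the stated inequality $[\lambda(1-p)+(1-\lambda)p_2](p-p_1)>[\lambda p+(1-\lambda)p_1](1-p-p_2)$; otherwise $x=0$. In either case the surviving payout is recovered from $x=\gamma y+\zeta$ and $s(I)$ from the binding form of (\ref{eq:fee inequality}). For the extension to $n$ return levels I would note that the identical reduction leaves a linear objective minimized over the nonnegative payout vectors satisfying the single effort equality $\sum_i(p_i^G-p_i^B)s(R_i)=c/\lambda$, where $p_i^G,p_i^B$ denote the good- and bad-type probabilities of level $R_i$; this feasible set is a hyperplane intersected with the nonnegative orthant, and its vertices have a single nonzero coordinate, so the optimum again loads all weight on one return level.

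The step I expect to require the most care is the binding argument for the two constraints, since one must verify that the proposed perturbations — lowering $s(I)$, and uniformly scaling the payouts — remain feasible and strictly improve the objective without being blocked by limited liability; once both constraints are known to bind, the reduction to a one-dimensional affine program and the corner conclusion are routine, and the sign bookkeeping that selects between $s(R_1)=0$ and $s(R_2)=0$ is a direct rearrangement of the coefficient of $y$.
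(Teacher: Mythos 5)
Your proof is correct and takes essentially the same route as the paper: make both incentive constraints bind, eliminate $s(I)$, reduce the objective to an affine function of $y$ on the segment cut out by $x\geq 0$, $y\geq 0$, and read off which corner ($x=0$ or $y=0$) wins from the sign of the coefficient of $y$. The only difference is that you spell out the perturbation and scaling arguments for why the two constraints bind, which the paper compresses into "the same reasoning as in the binary case."
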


Let's just explain the intuition behind the property that either $s(R_{1})=0$
or $s(R_{2})=0.$ LP wants to minimize the incentive for the GP to
invest in a bad project. To do this, LP considers the (weighted) difference
of probabilities that either the outcome is $R_{1}$ or $R_{2}$.
Then he makes no payment in the state with lower chance. The other
outcome associates to higher chance of investing in the good project
so GP motivates it in the contract.

Now I look at the whole-portfolio problem. Again we assume that the
policy implemented is optimal so we have $s(I)=s(R_{1})=s(R_{2})=0.$
Possible returns from optimal strategy are $2I$ when two projects
are bad, $R_{i}+I$ when one is good and other is bad and finally
$2R_{i}$ or $R_{1}+R_{2}$ when both are good. I assume parameters
for the first projects are $\lambda_{1},p,p_{1}$ and $p_{2}$. For
the second one $\lambda_{2},q,q_{1}$ and $q_{2}.$ I show the payouts
to GP by $z=s(2I),y_{i}=s(R_{i}+I),x_{i}=s(2R_{i})$ and $x=s(R_{1}+R_{2}).$
I assume correlation $\rho$ between good projects as in the main
case and assume that for good projects the realization of the returns
are independent. We have 
\begin{align}
 & s(2I)\geq\sum p_{i}s(R_{i}+I),\sum q_{i}s(R_{i}+I),\sum p_{i}q_{j}s(R_{i}+R_{j})\nonumber \\
 & ps(R_{1}+I)+(1-p)s(R_{2}+I)=E_{G_{1}}[s(R_{i}+I)]\geq\nonumber \\
 & s(2I),[pq_{2}+(1-p)q_{1}]s(R_{1}+R_{2})+pq_{1}s(2R_{1})+(1-p)q_{2}s(2R_{2})\nonumber \\
 & qs(R_{1}+I)+(1-q)s(R_{2}+I)=E_{G_{2}}[s(R_{i}+I)]\geq\nonumber \\
 & s(2I),[qp_{2}+(1-q)p_{1}]s(R_{1}+R_{2})+qp_{1}s(2R_{1})+(1-q)p_{2}s(2R_{2})\nonumber \\
 & [p(1-q)+q(1-p)]s(R_{1}+R_{2})+pqs(2R_{1})+(1-p)(1-q)s(2R_{2})\nonumber \\
 & \geq E_{G_{j}}[s(R_{i}+I)]\label{eq:motivation-investment}
\end{align}
 And finally LP should impose the equation which motivates effort.
This can be written as 
\begin{align*}
 & \rho\lambda_{min}E_{G_{1},G_{2}}[s(R_{i}+R_{j})]+(\lambda_{max}-\rho\lambda_{min})E_{G_{max}}[s(R_{i}+I)]\\
 & +(1-\rho)\lambda_{min}E_{G_{min}}[s(R_{i}+I)]+(1-\lambda_{1}-\lambda_{2}+\rho\lambda_{min})s(2I)\\
 & \geq z+2I,\lambda_{max}E_{G_{max}}[s(R_{i}+I)]+(1-\lambda_{max})z+c,\lambda_{min}E_{G_{min}}[s(R_{i}+I)]+(1-\lambda_{min})z+c
\end{align*}

As in the binary case, LP wants to minimize the expected payout to
GP (LHS of the last inequality) given constraints above. Similar to
the binary case, we can see that $E_{G_{1},G_{2}}[s(R_{i}+R_{j})]$
can be represented with an inequality. So LP problem can be written
as 
\[
\min E_{G_{1},G_{2}}[s(R_{i}+R_{j})]
\]
 subject to conditions for optimal investment and motivation for effort.
Since the general problem seems hard to solve and get good intuition
from, I stick to two important especial cases.
\begin{enumerate}
\item Suppose either $p_{1}$ and $q_{1}$ are small or $p_{2}$ and $q_{2}$
are small. In this case, suppose LP changes compensations for GP in
the case of two successful investment, while expected payout remains
the same. By this I mean changing $x_{i}$ and $x$ such that 
\[
[p(1-q)+q(1-p)]x+pqx_{1}+(1-p)(1-q)x_{2}
\]
remains fixed. By this change, RHS of the equations for optimal investment
in case of one success or no success can be changed (The first three
equation in the set of equations (\ref{eq:motivation-investment}))
. As long as RHS becomes smaller in these equations, the change can
be good (or have no effect if conditions are not binding on them).
So it is better for GP to consider payouts to minimize RHS of motivating
equations. So in this case, in the optimal contract, we get only $x_{i}>0$
(and $x_{j},x$ are zero) when $p_{i}$ and $q_{i}$ are small. Similar
reasoning implies only $y_{i}>0$ and $y_{j}=0$ when $p_{i}$ and
$q_{i}$ are small. Hence in this case problem reduces effectively
to the binary case.
\item Now consider the orthogonal problem to what we discussed in the previous
part. So I assume $p_{2}=q_{1}=0$ so the first bad project only have
return $R_{1}$ and the second one only $R_{2}$. Equations (\ref{eq:motivation-investment})
are reduced to 
\begin{align*}
 & s(2I)\geq p_{1}s(R_{1}+I),q_{2}s(R_{2}+I),p_{1}q_{2}s(R_{1}+R_{2})\\
 & ps(R_{1}+I)+(1-p)s(R_{2}+I)=E_{G_{1}}[s(R_{i}+I)]\geq\\
 & s(2I),pq_{2}s(R_{1}+R_{2})+(1-p)q_{2}s(2R_{2})\\
 & qs(R_{1}+I)+(1-q)s(R_{2}+I)=E_{G_{2}}[s(R_{i}+I)]\geq\\
 & s(2I),(1-q)p_{1}s(R_{1}+R_{2})+qp_{1}s(2R_{1})\\
 & [p(1-q)+q(1-p)]s(R_{1}+R_{2})+pqs(2R_{1})+(1-p)(1-q)s(2R_{2})\\
 & \geq E_{G_{j}}[s(R_{i}+I)]
\end{align*}
 In this case if we get $s(2I)=z,$ then similar reasoning as in binary
case, 
\begin{align*}
 & z=p_{1}s(R_{1}+I)=p_{1}^{2}s(2R_{1})\\
 & z=q_{2}s(R_{2}+I)=q_{2}^{2}s(2R_{2})\\
 & z=p_{1}q_{2}s(R_{1}+R_{2})
\end{align*}
 which implies that , as in the main case, $E[s_{GP}]$ is decreasing
in $\rho$ as well. 
\end{enumerate}

\end{document}